\let\csname equation*\endcsname\relax
\let\csname endequation*\endcsname\relax
\newcommand{\mainmatter}{%
  \setcounter{footnote}{0}%
  \patchcmd{\@makefntext}{\fnsymbol}{\arabic}{}{}%
  \patchcmd{\@thefnmark}{\fnsymbol}{\arabic}{}{}%
  \def\@makefnmark{\textsuperscript{\arabic{footnote}}}%
}
\newcommand{\II}{\mathbb{I}}
\renewcommand{\Im}{\mathrm{{Im}}}
\newcommand{\Pj}[2]{P^{#1}_{#2}}   
\newcommand{\Pji}[2]{(P^{-1})^{#1}_{#2}}
\newcommand{\SLtwoC}{\mathrm{SL}(2,\mathbb{C})}	
\newcommand{\sltwoC}{\mathfrak{sl}(2,\mathbb{C})}
\newcommand{\LGC}{\mathrm{SO}(3,1;\mathbb{C})}
\newcommand{\LGCp}{\mathrm{SO}(3,1)^+_\mathbb{C}}
\newcommand{\la}{\mathfrak{so}(3,1)}
\newcommand{\laC}{\mathfrak{so}(3,1)_\mathbb{C}}
\newcommand{\laCp}{\mathfrak{so}(3,1)^+_\mathbb{C}}
\newcommand{\laCm}{\mathfrak{so}(3,1)^-_\mathbb{C}}
\newcommand{\laCpm}{\mathfrak{so}(3,1)^\pm_\mathbb{C}}
\newcommand{\RR}{\mathbb{R}}
\newcommand{\CC}{\mathbb{C}}
\newcommand{\MM}{\mathbb{M}}
\newcommand{\Pud}[2]{\tensor{P}{^{#1}_{#2}}}
\newcommand{\Ppm}[2]{\prescript{\pm}{}{\tensor{P}{^{#1}_{#2}}}}
\newcommand{\Ppmi}[2]{\prescript{\pm}{}{\tensor{P}{_{#1}^{#2}}}}
\newcommand{\Pmp}[2]{\prescript{\mp}{}{\tensor{P}{^{#1}_{#2}}}}
\newcommand{\Pp}[2]{\prescript{+}{}{\tensor{P}{^{#1}_{#2}}}}
\newcommand{\Pm}[2]{\prescript{-}{}{\tensor{P}{^{#1}_{#2}}}}
\newcommand{\Pjpm}[2]{{\prescript{\pm}{}{P}^{#1}_{#2}}}
\newcommand{\Pjpmi}[2]{{(\prescript{\pm}{}{P^{-1}})^{#1}_{#2}}}
\newcommand{\D}[1]{\mathrm{d}#1}
\newcommand{\T}[2]{\tensor{#1}{#2}}
\newcommand{\lc}[1]{\tensor{\epsilon}{#1}}
\newcommand{\sce}[1]{\tensor{\mathring{\epsilon}}{#1}}
\newcommand{\scf}[1]{\tensor{\mathring{f}}{#1}}
\newcommand{\scfpm}[1]{\tensor{\mathring{f}}{^\pm#1}}
\newcommand{\lcdp}[1]{\tensor{\tilde{\epsilon}}{#1}}    
\newcommand{\lcdm}[1]{\tensor{\underaccent{\tilde}{\epsilon}}{#1}} 
\newcommand{\tE}{\tilde{E}}
\newcommand{\Et}{\underaccent{\tilde}{E}}
\newcommand{\Nt}{\underaccent{\tilde}{N}}
\newcommand{\tN}{\tilde{N}}
\newcommand{\grad}[1]{\mathrm{d}{#1}}
\newcommand{\curlyD}{\mathcal{D}}
\newcommand{\curlyL}{\mathcal{L}}
\newcommand{\curlyM}{\mathcal{M}}
\newcommand{\rb}[1]{\left({#1}\right)}
\newcommand{\brb}[1]{\bigl({#1}\bigr)}
\newcommand{\Brb}[1]{\Bigl({#1}\Bigr)}
\newcommand{\curlybrackets}[1]{\left\{{#1}\right\}}
\newcommand{\SUtwo}{\mathrm{SU}(2)}			
\newcommand{\sutwo}{\mathfrak{su}(2)}			
\newcommand{\sothreeone}{\mathrm{so}(3,1)} 
\newcommand{\g}{\mathfrak{g}}        
\newcommand{\dxd}[2]{\!\mathrm{d}^{#1}{#2\,}}				
\newcommand{\commutator}[2]{\left[{#1},{#2}\right]}			
\renewcommand{\PB}[2]{\left\{#1,#2\right\}}	
\newcommand{\slotdot}{\boldsymbol{\,\cdot\,}}
\newcommand{\addref}{{\color{teal} add reference}}
\newtheorem{theorem}{Theorem}[section]  
\newtheorem{proposition}[theorem]{Proposition}
\begin{document}

\title[Revisiting LQG with selfdual variables: Classical theory]{Revisiting loop quantum gravity with selfdual variables: Classical theory}

\author{Hanno Sahlmann$^1$ and Robert Seeger$^2$}

\address{Friedrich-Alexander-Universit\"at Erlangen-N\"urnberg (FAU)\\ Institute for Quantum Gravity, Staudtstra{\ss}e 7/B2, 91058 Erlangen, Germany}
\ead{$^1$ hanno.sahlmann@gravity.fau.de, $^2$ robert.seeger@gravity.fau.de}
\vspace{10pt}
\begin{indented}
\item[]October 2023
\end{indented}

\begin{abstract}
We review the classical formulation of general relativity as an $\SLtwoC$ gauge theory in terms of Ashtekar's selfdual variables and reality conditions for the spatial metric (RCI) and its evolution (RCII), and we add some new observations and results. 

We first explain in detail how a connection taking values in the Lie algebra of the complex Lorentz group yields two Spin(3,1) connections, one selfdual and one anti-selfdual, without the need for a spin structure. 

We then demonstrate that the selfdual part of the complexified Palatini action in Ashtekar variables requires a holomorphic phase space description in order to obtain a non-degenerate symplectic structure. The holomorphic phase space does not allow for the implementation of the reality conditions as additional constraints, so they have to be taken care of ``by hand'' during the quantisation. We also observe that, due to the action being complex, there is an overall complex phase that can be chosen at will.   

We then review the canonical formulation and the consequences of the implementation of the reality conditions. We pay close attention to the transformation behaviour of the various fields under (complex) basis changes, as well as to complex analytic properties of the relevant functions on phase space. 
\end{abstract}

%
%
%
%
%
\mainmatter
\section{Introduction}\label{sec_introduction}

If one wants to describe fermions coupled to gravity, the metric field does not suffice. Rather, one needs to introduce frame fields, and with them the symmetry under frame rotations, a spin structure and hence a local $\SLtwoC$ symmetry. In this context, as the fermions in the standard model are chiral, it is of special interest to consider chiral formulations of gravity. One example are the variables introduced by Sen \cite{sen_gravity_1982} and Ashtekar \cite{ashtekar_new_1986,ashtekar_new_1987,ashtekar_gravitational_2021} for gravity. The basic variables are a frame field $e^I$ and an $\laC$ connection $\omega$. The fields are selfdual under the Hodge operator on $\laC$, 
\begin{equation}
    *\omega = \pm i \omega, \qquad *(e\wedge e)= \pm i (e\wedge e). 
\end{equation}
As a consequence, these variables are complex valued, and the relation to real-valued gravitational fields is not entirely straightforward. On the other hand, these variables have very interesting properties in a canonical formulation of gravity: Pulling $e\wedge e$ and $\omega$ back to a spatial slice yields canonical variables and a first order system of constraints \cite{ashtekar_new_1987}. No further gauge fixing is necessary. Moreover, the structure of the constraints -- in particular that of the Hamilton constraint -- is rather simple. Finally, when coupling to fermions, it is the selfdual connection $\omega$ that couples to the right-handed fermions, and its complex conjugate to the left-handed ones. In the supersymmetric context, it is even more apparent that the chiral theory is special. It retains some manifest supersymmetry in the canonical theory \cite{eder_super_2021,eder_holst-macdowell-mansouri_2021}. 

For these reasons, Ashtekar variables seem to be an excellent starting point for a quantization program for gravity using techniques from gauge theory. Consequently there has been considerable work done in this direction, from early works  \cite{thiemann_solution_1993,kastrup_spherically_1994,ashtekar_coherent_1994,thiemann_reality_1996} over later developments  \cite{wieland_complex_2012,wieland_twistorial_2012,wilson-ewing_loop_2015,achour_analytic_2015,achour_loop_2015} to recent works \cite{varadarajan_euclidean_2019,ashtekar_gravitational_2021, alexander_inner_2022,wieland_simplicial_2023}.

Nevertheless most progress in loop quantum gravity (see for example \cite{thiemann_modern_2007,ashtekar_background_2004}) has been made using a formulation in terms of a \emph{real valued} $\SUtwo$ connection, the Ashtekar-Barbero connection \cite{barbero_g_real_1995,immirzi_real_1997}. The reasons for this are technical in nature: Compactness of $\SUtwo$ simplifies the construction of measures on spaces of connections. Furthermore, the reality conditions hard to implement in the quantum theory. 

There are some hints from the $\SUtwo$ theory that a more fundamental theory with $\SLtwoC$ gauge symmetry might exist. The $\SUtwo$ theory would then be a technically simpler theory that has, at least in some cases, a precise relationship with the fundamental $\SLtwoC$ theory. This situation would be in close analogy to that in quantum field theory in which Euclidean and Lorentzian theory are related in non-trivial ways by the Wick transform. For example, \cite{thiemann_canonical_1993,wilson-ewing_loop_2015,achour_loop_2015,ben_achour_spherically_2017} demonstrate that Ashtekar variables can be used for quantization in a symmetry-reduced context. The same is true even for the supersymmetric theory \cite{eder_supersymmetric_2021}. For the calculation of black hole entropy from loop quantum gravity using real variables, the counting of states of an $\SUtwo$ Chern-Simons theory is essential, and the result depends on the Barbero-Immirzi parameter $\beta$. It has been observed that a certain analytic continuation, corresponding to the limit $\beta \longrightarrow i$, i.e., to a change to the chiral theory, and to an $\SLtwoC$ Chern-Simons theory leads to the  Bekenstein-Hawking entropy without any free parameters \cite{frodden_black-hole_2014,jibril_analytic_2015}. Again, it turns out that an analogous result holds in the supersymmetric theory \cite{eder_towards_2022,eder_chiral_2022}. 

Given the evidence that there could exist a chiral quantum theory that is closely related to the $\SUtwo$ theory, it is a natural task to look for this theory more directly. The present article aims to contribute towards this goal. We lay out the classical theory as clearly, precisely and as general as we can and explain the options one has as one proceeds to quantize. 
Some of this general ground has been covered elsewhere, see for example \cite{ashtekar_self_1992,baez_gauge_1994,thiemann_reality_1996,ashtekar_coherent_1996,ashtekar_lectures_1998,wieland_complex_2012,ashtekar_gravitational_2021}.

We hope to add to the classical foundations of the subject in several ways, however. the most important ones are: 
\begin{itemize}
    \item We analyze the structure of the variables of complex GR and demonstrate in detail how the existence of the selfdual and anti-selfdual connections follow. In particular, we review the structure of $\LGC$ and use that to show how the $\laC$-connection of complex Palatini gravity can be transformed into the selfdual and anti-selfdual $\sltwoC$-valued Ashtekar connections. 
    \item We show in detail that complexified GR only leads to a non-degenerate symplectic structure if a holomorphic viewpoint is taken.\footnote{Another starting point is a non-holomorphic phase space not derived from an action. One can then formulate constraints that lead to classical GR when implemented. This is the starting point taken in \cite{wieland_complex_2012}.}    
    \item In our calculation, we strive to carefully distinguish internal three- and four-dimensional metrics, structure constants and Cartan-Killing metrics, to bring out their relationships. Since we make our choices of bases explicit, it is also clear how all objects would transform under changes of bases\footnote{In particular the choice of basis in $\sltwoC$ can lead to confusion as it influences the reality of various functions.}. 
    \item We observe that there is an undetermined phase in front of the action of complex GR that influences the symplectic structure and the constraints, but not the dynamics. We show that the reality conditions are not affected by this phase, and will always lead to real Einstein gravity, however with a changed symplectic structure. It appears that the resulting symplectic structure has been also used in \cite{thiemann_reality_1996,thiemann_account_1995}. This change may have consequences for the quantum theory.   
\end{itemize}

This article consists of two major parts. 
In section \ref{sec:selfdual_GR} we consider the construction of a holomorphic formulation of general relativity in terms of Ashtekar's selfdual variables. 
Section \ref{sec:reality_conditions} then deals with the formulation of the reality condition, which are necessary in order to retrieve a description of real gravity.
After an introduction to complexified GR in section \ref{sec:complex_GR}, in section \ref{sec:complex_Lorentz_algebra_selfduality_SLtwoC} we describes in detail how one obtains Ashtekar's $\SLtwoC$ connection from a complexified Lorentz connection, under a split of this connection into its selfdual and anti-selfdual parts. In section \ref{sec:SelfdualPalatiniAction} we use this to split the Palatini action into its selfdual and anti-selfdual contributions, from which we construct a holomorphic Hamilton formulation of the selfdual action. There, we introduces a phase in front of the action and discuss the consequences for the Poisson relation of the canonical pair and the constraints. 
In section \ref{sec:RCI} we are concerned with the recovery of a real spatial metric from selfdual GR, i.\,e. we describe the first reality condition and its formulation in terms of Ashtekar's variables. In section \ref{sec:RCII} we describe the second reality condition, that is supposed to keep the spatial metric real with respect to the dynamics. 
As a justification of the reality conditions we consider their classical implementation in order to recover real ADM gravity in section \ref{sec:recovering_ADM}. 
Finally, in section \ref{sec:discussion} we discuss what understanding we were able to add to the description of GR in selfdual variables and give and outlook to a possible quantisation.
Regarding conventions, we use the metric signature $(-,+,+,+)$.

\section{Selfdual holomorphic general relativity in Ashtekar variables}\label{sec:selfdual_GR}
In this section we present the formulation of general relativity in Ashtekar variables. To this end we start with complex general relativity, based on the notion of a complex Lorentz connection and a complex tetrad. The underlying structure group is thus the complex Lorentz group $\LGC$. We then single out the selfdual parts and derive the corresponding Hamilton formulation. For other accounts of this see \cite{ashtekar_new_1986,baez_gauge_1994,ashtekar_lectures_1998,ashtekar_gravitational_2021}.

\subsection{Complexified general relativity}\label{sec:complex_GR}
We start with the description of complexified general relativity, as a theory of gravity over a real differential manifold $\curlyM$, but with a complexified tangent bundle $T_\CC\curlyM$. In this setup coordinates are real but all objects related to tangent space are understood to be complex. 
This setup is described in detail or example in \cite{baez_gauge_1994}.

In the Palatini formulation of complex general relativity, we therefore work with complex frame fields $e^I_\alpha$ and a complex Lorentz connection $\T{\omega}{^I_J}=\T{\omega}{_\alpha^I_J}\grad{x^\alpha}$ with
\begin{equation}\label{eq:lorentz_algebra}
     \T{\omega}{_{\alpha IJ}}=\T{\omega}{_\alpha^K_J}\eta_{KI}=-\T{\omega}{_{\alpha JI}}.
\end{equation}
Here and in the following, Lorentz indices are pulled with $eta$. This connection is understood as a connection of the complexified Lorentz group $\LGC$, which we will look at in detail in section.

The frame fields act as soldering forms between the complexified tangent space and the complexified internal Minkowski space which is consequently also complexified. The tetrads $e^I_\mu$ relate the complex spacetime metric $g_{\alpha\beta}$ to the Minkowski metric $\eta_{IJ}$: 
\begin{equation}
    g_{\alpha\beta}=e^I_\alpha e^J_\beta \eta_{IJ}. 
\end{equation}
Note, however, that the Minkowski metric $\eta$ is not complex, but stays real.

In this complexified version of general relativity, we consequently analyse a complex version of the Palatini action
\begin{equation}
    S_\CC[\omega,e]=\frac{\lambda}{\kappa}\int_\curlyM \epsilon_{IJKL}\Sigma^{IJ}\!\wedge F^{KL}.
\end{equation}
This is constructed from the complex two-form
\begin{equation}
    \Sigma^{IJ}:=e^I\wedge e^J = e^I_\alpha e^J_\beta \mathop{\D{x^\alpha}\wedge\D{x^\beta}} 
\end{equation}
and the curvature of the complex Lorentz connection 
\begin{equation}\label{eq:curvature_LorentzConnection}
    \T{F}{^I_J}[\omega]=\grad{\T{\omega}{^I_J}}+\T{\omega}{^I_K}\wedge\T{\omega}{^K_J}.
\end{equation}
Note that we rescale the action by a factor $\lambda\in\CC$. This has, in the end, to do with the quantum theory corresponding to this action. We comment on its role in this classical description later, where its effects are visible and discuss the specific cases $\lambda=1,i$.

Ultimately, we want to restrict to the selfdual part of this complex action. This is done in section \ref{sec:SelfdualPalatiniAction}. Before this, however, we recall the complexification of the Lorentz Lie algebra and its relation to $\sltwoC$ in the next section.
   
\subsection{Complexification of the Lorentz Lie algebra, selfduality and $\sltwoC$}
\label{sec:complex_Lorentz_algebra_selfduality_SLtwoC}

In the following we have occasion to use complexifications of Lie algebras. We summarised the important aspects in \ref{app:complexification_lie_algebras}.

\subsubsection{(Anti-)selfdual split of the complexified Lorentz algebra}
In signature $(-+++)$, the Hodge star $*$ is a skew involution, $*^2=-1$. As a map $\Lambda^2(\MM)\rightarrow \Lambda^2(\MM)$ it is selfadjoint with respect to the natural inner product. To obtain a decompoisition into eigenspaces, we have to work with the complexification $\Lambda^2(\MM)_\CC$.  On this space $*$ has eigenvalues $\pm i$ and the projectors onto the eigenspaces are 
\begin{equation}
    \Ppm{}{}:=\frac{1\mp i\ast}{2}.
\end{equation} 
Working in a basis, we get the explicit expression
\begin{equation}
\Ppm{IJ}{KL}:=\frac{1}{2}\rb{\delta^{[I}_K\delta^{J]}_L\mp\frac{i}{2}{\epsilon^{IJ}}_{KL}},
\end{equation}
acting on an object with two internal upper indices.
From this we directly see that the projectors are symmetric under exchanging the first and second pair of indices, i.\,e. 
\begin{equation}
    \Ppm{IJ}{KL}=\Ppmi{KL}{IJ}.
\end{equation}
Hence we have immediately the action on internal two-forms.
It is easy to see that $\Ppm{}{}$ is selfdual/anti-selfdual in both pairs of indices: 
\begin{equation}
\begin{aligned}
    \frac{1}{2}\lc{^I^J_M_N}\Ppm{MN}{KL}=\pm i \Ppm{IJ}{KL},\\
     \frac{1}{2}\lc{_K_L^M^N}\Ppm{IJ}{MN}=\pm i \Ppm{IJ}{KL}.
\end{aligned}     
\end{equation}
Since $\Ppm{}{}$ are projections onto eigenspaces, one has
\begin{equation}\label{eq:projection_properties}
\begin{aligned}
&\Ppm{IJ}{KL}\Pmp{KL}{MN}=0,\\ 
&\Ppm{IJ}{KL}\Ppm{KL}{MN}=\Ppm{IJ}{MN},\\ &\Pp{IJ}{KL}+\Pm{IJ}{KL}=\delta^{[I}_K\delta^{J]}_L.
\end{aligned}
\end{equation}
So $\Pp{}{}$ and $\Pm{}{}$ are indeed orthogonal projectors. 

In the following we will use the basis
\begin{equation}\label{eq:generators_laC}
    \T{(L_{IJ})}{^K_L}=\delta^K_{[I}\eta_{J]L}
\end{equation}
of $\laC$.
In this basis, the components for a Lie algebra  element, e.\,g. the connection $\omega^I{}_J$, are simply $\omega^{IJ}$.

Given this basis, we can now understand how the notion of (anti-)selfduality, as a concept on the internal Minkowski space, translates to $\laC$ matrices. 
Acting on the labels of the generator yields 
\begin{equation}
    \Ppmi{IJ}{MN}\T{(L_{MN})}{^K_L}=\Ppmi{IJ}{MN}\delta^K_{[N}\eta_{M]L}=\T{\Ppm{}{}}{_{IJ}^K_L}.
\end{equation}
Because of the symmetry of labels and matrix indices of the generators, i.\,e. $\T{(L_{IJ})}{_{KL}}$, this is exactly the same as acting with the projection on the matrix indices:
\begin{equation}
    \T{\Ppm{}{}}{^K_{LM}^N}\T{(L_{IJ})}{^M_N}=\T{\Ppm{}{}}{_{IJ}^K_L}.
\end{equation}
Hence (anti-)selfdual projection of components makes the corresponding matrix (anti-)selfdual as well. As a consequence, the (anti-)selfdual generators are
\begin{equation}
    {L^{\pm}}_{IJ}:=\Ppm{MN}{IJ}L_{MN},
\end{equation}
with
\begin{equation}\label{eq:selfdual_generators_laCpm}
    \T{({L^{\pm}}_{IJ})}{^K_L}:=\T{\Ppm{}{}}{_{IJ}^K_L}.
\end{equation}

With this established, we can now use the last relation in \eref{eq:projection_properties} to split $a\in\laC$ into its selfdual and anti-selfdual parts 
\begin{equation}
    a=a^{IJ}L_{IJ}=a^{IJ}\rb{\Pp{KL}{IJ}+\Pm{KL}{IJ}}L^{KL}=a^{+IJ}{L^+}_{IJ}+a^{-IJ}{L^-}_{IJ},
\end{equation}
where both the components $a^{\pm IJ}:=\Ppm{IJ}{KL}a^{KL}$ and the generators ${L^{\pm}}_{IJ}$ are now (anti-)selfdual. As the projection $\Ppm{}{}$ is orthogonal, this decomposes $\laC$ into
selfdual and anti-selfdual parts $\laCpm:=\Ppm{}{}\laC$, respectively:
\begin{equation}
    \laC\cong\laC^+\oplus\laC^-.
\end{equation}
Elements of $\laCpm$ therefore are of the form 
\begin{equation}\label{eq:selfdual_alg_elements}
    a^\pm=a^{IJ}\Ppmi{IJ}{MN}L_{MN}=a^{\pm IJ}{L^{\pm}}_{IJ}.
\end{equation}

The next step is to  analyse the commutation relation of $\laC$, in order to show that $\laCpm$ are indeed Lie subalgebras. The commutation relation of $\laC$ is given by
\begin{equation}
    \commutator{L_{IJ}}{L_{KL}}=\scf{_{IJKL}^{MN}}L_{MN},
\end{equation}
where in the basis \eref{eq:generators_laC}, the structure constants $\scf{}{}$ are 
\begin{equation}
    \scf{_{IJKL}^{MN}}=2\delta^{[M}_{[I}\eta_{J][K}\delta^{N]}_{L]}.
\end{equation}
We now look at this commutation relation when $a,b\in\laC$ are decomposed into their (anti-)selfdual parts:
\begin{equation}
    \begin{aligned}
    \commutator{a}{b}&=\commutator{a^++a^-}{b^++b^-}
    =\commutator{a^+}{b^+}+\commutator{a^+}{b^-}+\commutator{a^-}{b^+}+\commutator{a^-}{b^-}.
    \end{aligned}
\end{equation}
In order to understand the action of the projection on this commutator, we have the to understand both the purely (anti-)selfdual commutators and the mixed ones.

To this end we consider the adjoint action on $\laC$. Let still $a,b\in\laC$:
\begin{equation}
    \pi(a)b:=\commutator{a}{b}.
\end{equation}
We know $\laCpm\subset\laC$ is invariant. Furthermore let $b^\pm\in\laCpm$. Hence,
\begin{equation}
    b\in\laCpm \Leftrightarrow \Ppm{}{}b=b.
\end{equation}
Now
\begin{equation}
    \begin{aligned}
    \pi(a)b^\pm\in\laCpm &\Leftrightarrow \Ppm{}{}\pi(a)b^\pm=\pi(a)b^\pm \\
    &\Leftrightarrow\Ppm{}{}\commutator{a}{b^\pm}=\commutator{a}{b^\pm}\quad\forall a \in \laC, \forall b^\pm\in \laCpm.
    \end{aligned}
\end{equation}

Restricting to $a^\pm\in\laCpm$, this tells us that the commutators of (anti-)selfdual Lie algebra elements are indeed (anti-)selfdual, even before the actual projection. 
Restricting instead to $a^\mp$, the commutators $\commutator{a^\mp}{b^\pm}$, with what we just showed, have to both selfdual and anti-selfdual. As the (anti-)selfdual decomposition is an orthogonal one, these commutators have to vanish. 
Hence we are left with 
\begin{equation}
    \commutator{a}{b}=\commutator{a^+}{b^+}+\commutator{a^-}{b^-}
\end{equation}
and projection yields 
\begin{equation}
\label{eq:P_is_a_homomorphism}
    \Ppm{}{}\commutator{a}{b}=\commutator{a^\pm}{b^\pm}.
\end{equation}
This in turn shows that $\laCpm$ are closed under the commutator and hence are Lie algebras on their own. 

The structure constants of these Lie subalgebras can be determined from the ones of $\laC$ as follows. As all (anti-)selfdual Lie algebra elements are still in $\laC$, it holds that 
\begin{equation}
    \commutator{a^\pm}{b^\pm}=a^{\pm IJ}b^{\pm KL}\scf{_{IJKL}^{MN}}L^\pm_{MN}.
\end{equation}
So at first glance, the structure constants of $\laCpm$ seem to be the same as the ones of $\laC$. This however disregards the fact that all pairs of indices of the structure constants are (anti-)selfdually projected. The actual structure constants of $\laCpm$ are therefore given by 
\begin{equation}
    \commutator{L^\pm_{IJ}}{L^\pm_{KL}}=\scfpm{_{IJKL}^{MN}}L^\pm_{MN}
\end{equation}
with 
\begin{equation}\label{eq:structure_constants_laC}
    \scfpm{_{IJKL}^{MN}}:=\Ppm{I'J'}{IJ}\Ppm{K'L'}{KL}\scf{_{I'J'K'L'}^{M'N'}}\Ppm{M'N'}{MN}.
\end{equation}
Using the explicit form of the structure constants in \eref{eq:structure_constants_laC}, the selfdual structure constants can be expressed solely in terms of contracted projections, namely
\begin{equation}
    \scfpm{_{IJKL}^{MN}}=-2\T{\Ppmi{}{}}{_{IJ}^{K'}_{L'}}\T{\Ppmi{}{}}{_{KL}^{L'}_{M'}}\T{\Ppm{}{}}{^{MNM'}_{K'}}=-2\tr\rb{\T{{L^\pm}}{_{IJ}}\T{{L^\pm}}{_{KL}}\T{{L^\pm}}{^{MN}}},
\end{equation}
where we employed the relation between projectors and the (anti-)selfdual generators.

Going back one step, we can look again at the projection of the $\laC$ commutator
\begin{equation}\label{eq:commutator_projection}
\begin{aligned}
    &\Ppm{}{}\commutator{a}{b}=\commutator{a^\pm}{b^\pm}=\commutator{\Ppm{} {}a}{\Ppm{}{}b}\\
    &\Leftrightarrow a^{IJ}b^{KL}\scf{_{IJKL}^{MN}}\Ppm{M'N'}{MN}L_{M'N'}=a^{I'J'}b^{K'L'}\Ppm{IJ}{I'J'}\Ppm{KL}{K'L'}\scf{_{IJKL}^{MN}}L_{MN}.
\end{aligned}    
\end{equation}
Here we treated the left hand side as the projection of the commutator and the right hand side as the commutator of the projections, explicitly.
 This now expresses the fact that the projectors are Lie algebra homomorphisms and that the adjoint representation of $\sltwoC_\CC$ is reducible. 
This yields the relation 
\begin{equation}\label{eq:selfdual_sc_relation}
    \scf{_{IJKL}^{M'N'}}\Ppm{MN}{M'N'}=\Ppm{I'J'}{IJ}\Ppm{K'L'}{KL}\scf{_{I'J'K'L'}^{MN}}.
\end{equation}
Projecting the pair $MN$ is trivial on the left hand side, as $(\Ppm{}{})^2=\Ppm{}{}$, but gives the (anti-)selfdual structure constants on the right hand side. Doing this and using the total antisymmetry of $\scf{_{IJKLMN}}$ and $\scfpm{_{IJKLMN}}$ in the pairs $IJ$, $KL$ and $MN$ yields
\begin{equation}\label{eq:structure_constants_laCpm}
    \scfpm{_{IJKLMN}}=\Ppm{I'J'}{IJ}\scf{_{I'J'KLMN}}=\Ppm{K'L'}{KL}\scf{_{IJK'L'MN}}=\Ppm{M'N'}{MN}\scf{_{IJKLM'N'}}.
\end{equation}
Hence it is sufficient to project just one of the pairs of indices of the $\laC$ structure constants. 
 
With the structure constants for both $\laC$ and $\laCpm$, we can compare the respective Cartan-Killing metrics. 
For $\laC$ it is given by 
\begin{equation}
    k_{IJKL}=2\tr\rb{L_{IJ}L_{KL}}=2\delta^M_{[I}\eta_{J]N}\delta^N_{[K}\eta_{L]M}=-2\eta_{I[K}\eta_{L]J}.
\end{equation}
The same result is obtained in a more tedious fashion, when determining the Cartan-Killing metric from contraction of two structure constants according to $k_{IJKL}=\scf{_{IJMN}^{M'N'}}\scf{_{KLM'N'}^{MN}}$. 

For $\laCpm$, however, we start from the structure constants and use \eref{eq:structure_constants_laCpm}:
\begin{equation}
    \T{{k^\pm}}{_{IJKL}}=\scfpm{_{IJMN}^{M'N'}}\scfpm{_{KLM'N'}^{MN}}=\Ppm{I'J'}{IJ}\Ppm{K'L'}{KL}\scf{_{I'J'MN}^{M'N'}}\scf{_{K'L'M'N'}^{MN}}.
\end{equation}
So the Cartan-Killing metric of $\laCpm$ is indeed the projection of the Cartan-Killing metric of $\laC$, i.\,e.
\begin{equation}\label{eq:selfdual_CKm_projector}
    \T{{k^\pm}}{_{IJKL}}=\Ppm{I'J'}{IJ}\Ppm{K'L'}{KL}k_{I'J'K'L'}=-2\T{\Ppm{}{}}{_{K'L'IJ}}\Ppm{K'L'}{KL}=-2\T{\Ppm{}{}}{_{IJKL}},
\end{equation}
where in the last step the symmetry under exchange of first and second pair of indices was used. Note that the pair $I,J$ here is pulled down with Minkowski metrics. So in combination with the factor of $-2$, we can understand this as 
\begin{equation}
    \T{{k^\pm}}{_{IJKL}}=\Ppm{MN}{IJ}k^\pm_{MNKL}=\Ppm{MN}{KL}k^\pm_{IJMN}
\end{equation}
and again, projecting one pair of indices is equivalent to projecting all pairs.

We want to make a remark about dimensions.
The projection to the (anti-)selfdual subalgebra halves the dimension of $\laC$ from $6\,\CC$ to $3\,\CC$ for $\laCpm$, respectively. The structure constants of $\laC$ are the same as for $\la$ and the projection does not change the symmetries of $\scfpm{}$ as compared to $\scf{}$.  Because of this and the $6$ real dimensions of the (anti-)selfdual algebra, we see that indeed both  $\laCp$ and $\laCm$ are isomorphic to $\la$ as real Lie algebras. 

We also remark that, besides $\sothreeone^{+}$ and $\sothreeone^{-}$, the (real) subspace of real matrices fulfilling \eref{eq:lorentz_algebra} furnishes a third subalgebra of $\sothreeone_\CC$. For $a\in \sothreeone_\CC$ one has
\begin{equation}
    \Im\; a = 0 \quad \Longleftrightarrow \quad a^- =\overline{a^+}. 
\end{equation}

\subsubsection{Relation to $\sltwoC$}\label{sec:Relation_sltwoC}

As we want to formulate selfdual general relativity in terms of an $\SLtwoC$ connection, we need go over from the (anti-)selfdual Lorentz algebra to $\sltwoC$. 
In order to fix the notation,
\begin{equation}
    \sltwoC=\{a\in\mathrm{Mat}(2,\CC)\,|\, \mathrm{tr}(a)=0\}=\mathrm{span}_\CC\{\tau_i \, | \, i=1,2,3\},
\end{equation}
with $\{\tau_i\}$ the generators of $\sltwoC$ that are also generators of  $\sutwo$.

Since $\la$ is isomorphic to $\sltwoC$ (as a real algebra), so is $\laCpm$. Let the isomorphism be given by 
\begin{equation}
\label{eq:iso}
	{}^{\pm}\Pj{}{}: \laCpm \rightarrow \sltwoC,
\end{equation}
where we have used the same symbol as for the (anti-)sefdual projector. The two are in fact closely related as will become clear momentarily.\footnote{Furthermore, in the following it will be clear from the index structure, weather the projector or the isomorphism is used.}

Under this isomorphism $a^\pm\in\laCpm$ is mapped to 
\begin{equation}
    \Pjpm{}{}(a^\pm):=\Pjpm{i}{IJ}a^{\pm IJ}\tau_i \in\sltwoC,
\end{equation}
which defines its components $\Pjpm{i}{IJ}$ and $a^{\pm i}:=\Pjpm{i}{IJ}a^{\pm IJ}$. This is invertible, so conversely we can map $x\in\sltwoC$ back to 
\begin{equation}
    \prescript{\pm}{}{P^{-1}}(x)=\Pjpmi{IJ}{i}x^i L^\pm_{IJ} \in\laCpm.
\end{equation}

Combining these two, we can write 
\begin{equation}
    \begin{aligned}
        &a^{\pm IJ}=(P^{-1}\circ P(a))^{IJ}=\Pjpmi{IJ}{i}\Pjpm{i}{KL}a^{\pm KL},\\
        &x^{i}=(P\circ P^{-1}(x))^i=\Pjpm{i}{IJ}\Pjpmi{IJ}{j}x^{j}.
    \end{aligned}
\end{equation}
From this we can conclude the following relations when contracting the isomorphism components over their $\laCpm$ and $\sltwoC$ indices, respectively. Namely,
\begin{equation}\label{eq:isomorphism_identities}
    \begin{aligned}
        &\Pjpmi{IJ}{i}\Pjpm{i}{KL}=\Ppm{IJ}{KL},\\
        &\Pjpm{i}{IJ}\Pjpmi{IJ}{j}=\delta^i_j.
    \end{aligned}
\end{equation}
For the first relation, we do not get just $\delta^I_{[K}\delta^J_{L]}$ but the projector, as selfduality of the index pairs has to be maintained. The projector however acts as $\mathbb{I}$ on selfdual objects.\footnote{We want to make clear that there is a difference when acting with $\Pjpm{i}{KL}$ on $\laC$, compared to $\laCpm$. Applied to $\laC$, elements are not only mapped to $\sltwoC$ but there is an implied projection to the (anti-)selfdual subalgebra, because of the (anti-)\-selfduality of Lorentz indices in $\Pjpm{i}{KL}$. Because of this projection, $P$ is not invertible and therefore cannot be an isomorphism. It is only an invertible and hence an isomorphism on the (anti)-selfdual subalgebras. Everything that follows takes place on the subalgebras.}

Next, we want to show, that the isomorphism does indeed map the respective Cartan-Killing metric onto each other. 
For $x,y\in\sltwoC$ the commutation relation is given by
\begin{equation}
    \commutator{x}{y}=x^iy^j\sce{_{ij}^k}\tau_k. 
\end{equation}
Acting with the isomorphism on the commutator of $a^\pm,b^\pm\in\laCpm$ and employing \eref{eq:isomorphism_identities} yields
\begin{equation}
    \begin{aligned}
    \Ppm{}{}(\commutator{a^\pm}{b^\pm})&=\Pjpm{k}{MN}\commutator{a^\pm}{b^\pm}^{MN}\tau_k\\
    &=a^{\pm IJ}b^{\pm KL}\scfpm{_{IJKL}^{MN}}\Pjpm{k}{MN}\tau_k\\
    &=a^{\pm i}b^{\pm j}\Pjpmi{IJ}{i}\Pjpmi{KL}{j}\scfpm{_{IJKL}^{MN}}\Pjpm{k}{MN}\tau_k.
   \end{aligned} 
\end{equation}
From this we conclude the expected relation of structure constants
\begin{equation}\label{eq:isomorphism_structure_constants}
    \sce{_{ij}^k}=\Pjpmi{IJ}{i}\Pjpmi{KL}{j}\scfpm{_{IJKL}^{MN}}\Pjpm{k}{MN}.
\end{equation}
Using once again \eref{eq:isomorphism_identities}, we obtain the desired relation between the Cartan-Killing metrics,
\begin{equation}\label{eq:isomorphism_CKmetrics}
\begin{aligned}
    k_{ij}&=\sce{_{im}^n}\sce{_{jn}^m}=\Pjpmi{II'}{i}\Pjpmi{JJ'}{j}\scfpm{_{II'KL}^{MN}}\scfpm{_{JJ'MN}^{KL}}\\
    &=\Pjpmi{II'}{i}\Pjpmi{JJ'}{j}k^\pm_{II'JJ'},
\end{aligned}    
\end{equation}
as expected. 

Solving this for one of the inverse isomorphism components, we get
\begin{equation}
\label{eq:isomorphism_CKmetrics_b}
    \Pjpmi{IJ}{i}=\Pjpm{j}{KL} k_{ij} k^{\pm IJKL}.
\end{equation}
We leave this relation as it is, because it is valid without reference to specific choices of bases. Further we do not want to move indices with the Cartan-Killing metrics, but with internal ones, i.\,e. the Minkowski metric and a yet to be specified internal three metric, acting at $\sltwoC$ indices.

Throughout the rest of this article, however, the generators of $\sltwoC$ are given by \begin{equation}
    \tau_j=\frac{\sigma_j}{2i}, 
\end{equation}
with the Pauli matrices $\sigma_j$. Hence the structure constants are just the Levi-Civita symbols $\lc{_{ijk}}$ where one index is raised with the Euclidean metric\footnote{We are this specific in stating the actual form of the structure constants, since we do not want to imply a specific choice of internal metric, yet. The structure constants $\sce{_{ij}^k}$ have a natural positioning of indices, which is given by their definition in terms of the commutation relation. Hence there is no metric involved. Only if we want to express them in terms of e.\,g. the Levi-Civita symbol, we need a notion of raising indices. It is also possible to extract the explicit values of the structure constants from the commutator, compute the Cartan-Killing metric and state the structure constants in terms of the Levi-Civita symbol with an index raised using the Cartan-Killing metric. In the chosen basis we would find $\sce{_{ij}^k}=-2\lc{_{ijm}}k^{mk}$}:
\begin{equation}
    \sce{_{ij}^k}=\lc{_{ijm}}\delta^{mk}.
\end{equation}
Consequently, the explicit form of the Cartan-Killing metric is 
\begin{equation}
    k_{ij}=\lc{_{ikm}}\delta^{ml}\lc{_{jln}}\delta^{nk}=-2\delta_{ij}
\end{equation}
and is also defined with respect to the Euclidean metric. With this prefactor matching the one of $\eta_{I[K}\eta_{L]J}$ in $k_{IJKL}$, \eref{eq:isomorphism_CKmetrics_b} becomes
\begin{equation}\label{eq:isomorphism_CKmetrics_explicit}
    \Pjpmi{IJ}{i}=\Pjpm{j}{KL} \delta_{ij} \eta^{IK}\eta^{LJ}, 
\end{equation}
where we absorbed (anti-)selfduality completely into the contraction with the isomorphism components. As this now involves the metric of the internal Minkowski space, associated with $\laCpm$, it is already at this point suggestive that the metric of the internal space, associated with $\sltwoC$, is the Euclidean metric. We will come back to this later.

Given our  choice of basis, we can determine explicit expressions for the isomorphism components from \eref{eq:isomorphism_identities}, up to a sign $s=\pm 1$, that is unrelated to (anti-)selfduality\footnote{We do not need to include \eref{eq:isomorphism_CKmetrics} here, as this is equivalent to the first relation in \eref{eq:isomorphism_identities}}. The relation between structure constants \eref{eq:isomorphism_structure_constants} uniquely fixes this sign and we end up with
\begin{equation}\label{eq:explicit_isomorphism}
\begin{aligned}
    \Pjpm{j}{KL}&=\begin{cases}
        \mp\frac{i}{2}\delta^j_L & K=0, L=1,2,3 \\
          \frac{1}{2}\lc{^j_{KL}} & K,L=1,2,3
    \end{cases} ,\\
    \Pjpmi{KL}{j}&=\begin{cases}
        \pm\frac{i}{2}\delta_j^L & K=0, L=1,2,3 \\
          \frac{1}{2}\lc{_j^{KL}} & K,L=1,2,3
    \end{cases}.
\end{aligned}  
\end{equation}
Index positions have to be understood with respect to the Euclidean metric. All other components, except the ones obtained from antisymmetry in the Lorentz indices, are vanishing.

\subsubsection{The complex Lorentz group and selfduality}
\label{sec:complex_lorentz_group}
With the isomorphism of Lie algebras \eref{eq:iso} are able to map an (anti-)selfdual Lorentz connection to an $\sltwoC$-valued field. However, we have to make sure that this object is indeed a connection as well. In order to see this, we need to have the correct behaviour under gauge transformation with respect to $\SLtwoC$. 

At first we have to specify the transformation behaviour of the complex Lorentz connection $\omega$, which we started with. 
Recall that the complex Lorentz group 
\begin{equation}
    \LGC=\curlybrackets{M\in\mathrm{Mat}(4,\CC)\, |\, M^T\eta M=\eta, \det(M)=1}
\end{equation}
is connected \cite{jost_pauli-prinzip_1960}. Therefore every element can be expressed by a product of exponentials of its Lie algebra, which is the complexified Lorentz Lie algebra $\laC$. 
Given $g\in\LGC$, we can decompose the corresponding Lie algebra elements $a_i=a^+_i+a^-_i$ into its commuting selfdual and anti-selfdual parts. This yields
\begin{equation}
\label{eq:LG_selfdual_decomp}
    g=\prod_i e^{a_i}=\prod_i e^{a^+_i+a^-_i}=\prod_i e^{a^+_i}\prod_j e^{a^-_j}=:g^+g^-.
\end{equation}
The group elements $g^\pm=\prod_i\exp{a^\pm_i}$ are commuting. This motivates to define
\begin{equation}
\label{eq:Gpm}
    G_\pm:=\curlybrackets{g_\pm\in\LGC\bigg| g_\pm=\prod_i e^{a^\pm_i}, a^\pm_i\in \laCpm }.
\end{equation}
So $G_\pm$ are all the elements of $\LGC$ which can be associated to selfdual or anti-selfdual Lie algebra elements, respectively. They inherit multiplication, inverse elements and the neutral element from $\LGC$. Therefore they are commuting subgroups of $\LGC$. By construction they are connected.  

We want to clarify how $G_\pm$ are related to $\SLtwoC$. Since they arise as exponentials of a Lie algebra isomorphic to $\SLtwoC$, they are the operators of a 4d representation of $\SLtwoC$. The Casimir elements are 
\begin{equation}
    \begin{aligned}
        C^\pm_1 &= k^{IJKL}L^\pm_{IJ}L^\pm_{KL}, \\
        C^\pm_2 &= k^{IJKL}L^\pm_{IJ}(\ast L^\pm)_{KL}=\pm iC^\pm_{1},
    \end{aligned}
\end{equation}
where we used the (anti-)selfduality of the generators. Using the fact that the (anti-)selfdual generators are given by the projectors, see \eref{eq:selfdual_generators_laCpm}, and the relation between Cartan-Killing and Minkowski metric, it is easy to determine 
\begin{equation}
    k^{IJKL}\T{(L^\pm_{IJ})}{^M_{M'}}\T{(L^\pm_{KL})}{^{M'}_N}=-\frac{1}{2}\Pmp{MM'}{NM'}=\frac{1}{4}\rb{\delta^{[M}_{N}\delta^{M']}_{M'}-\frac{i}{2}\lc{^{MM'}_{NM'}}}=\frac{3}{8}\delta^M_N.
\end{equation}
In the last step the trace of the Levi-Civita symbol vanishes. With this, the two Casimir elements are given by 
\begin{equation}\label{eq:Casimirs}
        \begin{aligned}
            \T{{C^\pm_1}}{^M_N} &=\frac{3}{8}\delta^M_N, \\
        \T{{C_2^\pm}}{^M_N} &= \pm i\frac{3}{8}\delta^M_N.
    \end{aligned}
\end{equation}
Next we introduce 
\begin{equation}
    \T{{(K^\pm_i)}}{^M_N}:=\T{{(L^\pm_{0i})}}{^M_N}=\T{\Ppm{}{}}{_{0i}^M_N}
\end{equation}
and
\begin{equation}
        \T{{(J^\pm_i)}}{^M_N}:=\T{(\ast L^\pm_{0i})}{^M_N}=\frac{1}{2}\lc{_{0i}^{jk}}\T{{(L^\pm_{jk})}}{^M_N}=\pm i\T{{(L^\pm_{0i})}}{^M_N} =\pm i\T{{(K^\pm_i)}}{^M_N}.
\end{equation}
Using this decomposition of the (anti-)selfdual basis, it is straightforward to see that we can express the Casimir elements in the additional forms
\begin{equation}
    \begin{aligned}
        C^\pm_1 &=K_\pm^2-J_\pm^2=2K_\pm^2=-2J_\pm^2, \\
        C_2^\pm &= 2K_\pm\cdot J_\pm=\pm 2iK_\pm^2=\mp 2iJ_\pm^2.
    \end{aligned}
\end{equation}
Here all products are with respect to the Euclidean three-metric. In comparison to the form of standard Casimirs of the Lorentz algebra, this suggests that $K^\pm$ and $J^\pm$ have indeed an interpretation as generators of boosts and rotations, respectively, even if they have been obtained from the (anti-)selfdual generators.\footnote{The reason for this is \eref{eq:P_is_a_homomorphism}.} 

This can however be verified when determining the commutation relations for $K^\pm$ and $J^\pm$. Using the commutation relation of (anti-)selfdual generators and the relation between $K^\pm$ and $J^\pm$, we find
\begin{equation}
    \begin{aligned}
        &\commutator{K^\pm_i}{K^\pm_j}=\pm \frac{i}{2} \lc{_{ij}^k} K^\pm_k=\frac{1}{2}\lc{_{ij}^k}J^\pm_k,\\
        &\commutator{J^\pm_i}{J^\pm_j}=(\pm i)^2 \commutator{K^\pm_i}{K^\pm_j}=-\frac{1}{2}\lc{_{ij}^k}J^\pm_k,\\
        &\commutator{K^\pm_i}{J^\pm_j}=\pm i \commutator{K^\pm_i}{K^\pm_j}=\pm \frac{i}{2} \lc{_{ij}^k}J^\pm_k=-\frac{1}{2} \lc{_{ij}^k}K^\pm_k.
    \end{aligned}
\end{equation}
Hence $K_\pm$ and $J_\pm$ indeed satisfy the commutation relations for boosts and rotations. In this form, the relations are the same for both,  self-dual and anti-selfdual algebra.

In the usual way we then recombine these into generators $M^\pm_j$ and $N^\pm_j$   of two commuting copies of $\sltwoC$. It turns out that $M^+_j=J^+_j$ and $N^-_j=J^-_j$, while the other generators vanish. By relating $J^{\pm}_j$ to the actual spin operators, we can determine the representation label of the Casimir elements in \eref{eq:Casimirs} to be $j=\frac{1}{2}$.  Therefore we have a multiplicity of two for this representation, in order to construct the four dimensional representation. 
We therefore find that $\laCp$ corresponds to the 
\begin{equation}
    \pi_+=\pi_{(\frac{1}{2},0)}\oplus\pi_{(\frac{1}{2},0)}
\end{equation}
representation of $\sltwoC$, while $\laCm$ corresponds to the
\begin{equation}
    \pi_-=\pi_{(0,\frac{1}{2})}\oplus\pi_{(0,\frac{1}{2})}
\end{equation}
representation.
Consequently there is indeed a basis in which one of the corresponding group representations is of diagonal form. Let us say this is the case for $G_+$. Let $g^+=\mathbb{I}_2\otimes g$ and $g^-=\Bar{g}\otimes\mathbb{I}_2$.
With this, we understand how $G_\pm$ are four dimensional representations of $\SLtwoC$. Hence the decomposition \eref{eq:LG_selfdual_decomp}, given a specific basis, can be understood as 
\begin{equation}
    g=g^+ g^-= \Pi_+(g_1)\Pi_-(g_1)=(\mathbb{I}_2\otimes g_1)(\Bar{g}_2\otimes\mathbb{I}_2)=\Bar{g}_2\otimes g_1.
\end{equation}
In this tensor product form we immediately see that
\begin{equation}
    \Bar{g}_2\otimes g_1=-\Bar{g}_2\otimes (-g_1).
\end{equation}
Hence the decomposition of $\LGC$ into $G_\pm$ is not unique, but takes the form 
\begin{equation}
    \LGC=G_+\times G_-/H\cong\SLtwoC\times\SLtwoC/H,
\end{equation}
with $H=\{\mathbb{I}_4,-\mathbb{I}_4\}$. This is a well known result, see e.\,g. \cite{sexl_relativity_2001}, which we arrived at, using the (anti-)selfdual decomposition of the complexified Lorentz Lie algebra.

\subsubsection{From Lorentz to $\SLtwoC$-connections}
\label{sec:From_Lorentz_to_SLtwoC_connections}
In section \ref{sec:Relation_sltwoC} we saw how to express $\laCpm$ objects in terms of $\sltwoC$. In this section we want to establish how the selfdual and anti-selfdual Lorentz connections are related to $\SLtwoC$ connections.

The last section tells us that there is a basis in which we can represent the $\laC$-connection $\omega$ in the form 
\begin{equation}
    \omega=\II\otimes A_++\Bar{A}_-\otimes\II,
\end{equation}
which ensures the commutativity of the selfdual and anti-selfdual parts, represented by the $\sltwoC$ elements $A_+$ and $A_-$, respectively. 
Understanding again $g$ as $\Bar{g}_-\otimes g_+$, the transformation behaviour of $\omega$ takes the form 
\begin{equation}
\begin{aligned}
    \omega'&=g\omega g^{-1} + g\grad g^{-1}=\II\otimes(g_+A_+g_+^{-1}+g_+\mathop{d}g_+^{-1})+
    (\Bar{g}_-\Bar{A}_-\Bar{g}_-^{-1}+\Bar{g}_-\mathop{d}\Bar{g}_-^{-1})\otimes\II\\
    &=\II\otimes A'_++\Bar{A}'_-\otimes\II.
\end{aligned}    
\end{equation}
Hence the (anti-)selfdual parts of the connection $\omega$ are not only connections on their own, but are indeed already $\sltwoC$-connections.

We note that our arguments so far only establish the existence of the $\sltwoC$-connections $\omega_\pm$ locally. To give a rigorous argument, one would have to establish the existence of the two $\SLtwoC$-bundles and their connections globally. We see no obstruction to doing this, and might come back to it elsewhere. Remarkably, it seems that no additional global structure, such as a spin structure, is needed when starting from the theory with $\LGC$ as structure group. This is in contrast to the real Ashtekar-Barbero formulation \cite{barbero_g_real_1995}, where a spin structure is needed to obtain the $\SUtwo$-formulation \cite{fleischhack_ashtekar_2011}.

\subsection{Selfdual Palatini action}
\label{sec:SelfdualPalatiniAction}

With the projectors that were introduced in order to split the complexified Lorentz algebra into its selfdual and anti-selfdual components, we can split the action of complex general relativity, respectively. 

\subsubsection{Selfdual split and $\LGCp$ variables}

Using that the projectors sum up to the identity in \eref{eq:projection_properties}, 
the complex Palatini action decomposes into a selfdual and an anti-selfdual part:

\begin{equation}
\begin{aligned}
    S_\CC[\omega,e]&=\frac{\lambda}{\kappa}\int_\curlyM \epsilon_{IJKL}\Sigma^{IJ}\!\wedge F^{KL}\\
    &=\frac{\lambda}{\kappa}\int_\curlyM \epsilon_{IJKL}\rb{\Pp{IJ}{I'J'}+\Pm{IJ}{I'J'}}\rb{\Pp{KL}{K'L'}+\Pm{KL}{K'L'}}\Sigma^{I'J'}\!\wedge F^{K'L'}\\
    &=\begin{aligned}[t]&\frac{\lambda}{\kappa}\int_\curlyM \epsilon_{IJKL}\Pp{IJ}{I'J'}\Pp{KL}{K'L'}\Sigma^{I'J'}\!\wedge F^{K'L'}\\
    &+\frac{\lambda}{\kappa}\int_\curlyM \epsilon_{IJKL}\Pm{IJ}{I'J'}\Pm{KL}{K'L'}\Sigma^{I'J'}\!\wedge F^{K'L'}.\end{aligned}
\end{aligned}
\end{equation}
Introducing the projected Plebanski two-form $\Sigma^{\pm IJ}$ and the (anti-)selfdual projection of the curvature curvature $F^{\pm IJ}$, the complex action decomposes into
\begin{equation}
	S=S_++S_-.
\end{equation}
The (anti-)selfdual actions $S_\pm$ are 
\begin{equation}\label{eq:Palatini_pm}
    S_\pm[\omega^{(\pm)},e]=\frac{\lambda}{\kappa}\int_\curlyM \epsilon_{IJKL}\Sigma^{(\pm)IJ}\!\wedge F^{(\pm)KL}.
\end{equation}
While the projection $\Sigma^{(\pm) IJ}=\Ppm{IJ}{KL}\Sigma^{KL}$ is straightforward, there is some more work required, in order to show that $F^{(\pm)IJ}$ is indeed the curvature of the (anti-)selfdual connection $\omega^{(\pm)IJ}=\Ppm{IJ}{KL}\omega^{KL}$.
For the second term in \eref{eq:curvature_LorentzConnection} we find that
\begin{equation}
    \T{\omega}{^I_M}\wedge\T{\omega}{^{MJ}}=\delta^{[I}_{[K}\eta_{L][M}\delta^{J]}_{N]}\T{\omega}{^{KL}}\wedge\T{\omega}{^{MN}}=\frac{1}{2}\scf{_{KLMN}^{IJ}}\T{\omega}{^{KL}}\wedge\T{\omega}{^{MN}}.
\end{equation}
Thus
\begin{equation}
    \Ppm{IJ}{I'J'} \T{\omega}{^{I'}_M}\wedge\T{\omega}{^{MJ'}}= \T{\omega}{^{(\pm)I}_M}\wedge\T{\omega}{^{(\pm)MJ}},
\end{equation}
where we used \eref{eq:structure_constants_laCpm} in order to pull the projectors through the structure constant, which again in absorbed in the contraction of indices. So indeed from \eref{eq:curvature_LorentzConnection} we get
\begin{equation}
    \Ppm{IJ}{KL}F[\omega]^{KL}=F[\omega^{(\pm)}]^{IJ}.
\end{equation}

We want to make a remark about the use of internal metrics here.
Given the considerations about the Cartan-Killing metric of $\laCpm$, one can ask oneself, whether is possible to write down the Palatini action \eref{eq:Palatini_pm} referring to the Cartan-Killing metric only. This metric, however, is only able to move antisymmetric pairs of indices and compared to eq. \eref{eq:curvature_LorentzConnection} we need to raise a single index of the curvature of the Lorentz connection. For this, however, we need the internal Minkowski metric.

In the following we will restrict ourselves to the selfdual part of the action only, because it yields a theory that reduces to Einstein gravity under certain reality conditions. Out of convenience, we will also drop the $+$ indicating selfdual objects. 

Next we look at the projection of the Plebanski two-form $\Sigma^{IJ}$, i.\,e. the selfdual part of $e^I\wedge e^J$. It is given by
\begin{equation}
    \Sigma^{IJ}=\Pud{IJ}{KL}e^K_\alpha e^L_\beta \mathop{dx^\alpha}\!\wedge\mathop{dy^\beta}=\frac{1}{2}\rb{e^{[I}_\alpha e^{J]}_\beta-\frac{i}{2}{\epsilon^{IJ}}_{KL}e^K_\alpha e^L_\beta}\mathop{dx^\alpha}\!\wedge\mathop{dy^\beta}
\end{equation}
From this we define the following\footnote{Note, that, technically, the $\Sigma_{ab}^{IJ}$ defined in this way differ from the components of the Plebanski form by a (convenient) factor of 1/2.}
\begin{equation}
    \Sigma^{IJ}_{\alpha\beta}:=\frac{1}{2}\rb{e^{[I}_\alpha e^{J]}_\beta-\frac{i}{2}{\epsilon^{IJ}}_{KL}e^K_\alpha e^L_\beta}.
\end{equation}
In order to further fix the notation, $\Sigma^{\alpha\beta}_{IJ}$, where indices where moved with the respective metrics, will be referred to as Plebanski bi-vector. 

Since the frame fields are soldering forms between the internal and the tangential space, it is natural to ask how selfduality of internal indices transfers to tangential ones. To this end, we first express the determinant of the spacetime metric in terms of the frame fields. This yields the usual relation
\begin{equation}
	\det\rb{g_{\alpha\beta}}=-\det\rb{e^I_\alpha}^2
\end{equation}
even though $g_{\alpha\beta}$ and $e^I_\alpha$ are complex quantities now. Taking the square root we can write this as 
\begin{equation}\label{eq:det_e_g_relation}
	\sqrt{-\det\rb{g_{\alpha\beta}}}=s_e\det\rb{e^I_\alpha},
\end{equation} 
where we introduced a sign function for the determinant of complex tetrads, i.\,e.
\begin{equation}\label{eq:sign_det_e}
	s_e:=\frac{\det\rb{e^I_a}}{\sqrt{\det\rb{e^I_a}^2}}.
\end{equation}
It is real and takes values $\pm 1$. Equation \ref{eq:det_e_g_relation} further shows that $\det\rb{e^I_a}$ carries a density weight of $+1$.

In order to relate the Hodge dual with respect to internal and tangential indices, respectively, we need the following identity for the Levi-Civita symbol:
\begin{equation}\label{eq:LC_general_delta_identity}
    \lcdp{^{\alpha'\beta'\gamma'\delta'}}\lcdm{_{\alpha \beta \gamma \delta}}=-4! \delta^{\alpha'}_{[\alpha}\delta^{\beta'}_\beta\delta^{\gamma'}_\gamma\delta^{\delta'}_{\delta]}.
\end{equation}
Here the tilde indicates the density weight of $\pm 1$ and the sign on the right hand side comes from all primed indices being raised with the Minkowski metric. Further, as the density weights cancel, this is also valid for the Levi-Civita tensor and with just internal indices.  Therefore
\begin{equation}
\begin{aligned}
   &\lc{_{IJKL}}e^I_\alpha e^J_\beta e^K_\gamma e^L_\delta= \lc{_{IJKL}}e^I_{\alpha'} e^J_{\beta'} e^K_{\gamma'} e^L_{\delta'}
   \delta^{\alpha'}_{[\alpha}\delta^{\beta'}_\beta\delta^{\gamma'}_\gamma\delta^{\delta'}_{\delta]}\\
   &= -\frac{1}{4!} \lc{_{IJKL}} \lcdp{^{\alpha'\beta'\gamma'\delta'}} e^I_{\alpha'} e^J_{\beta'} e^K_{\gamma'} e^L_{\delta'} \lcdm{_{\alpha \beta \gamma \delta}}\\
   &=-\det\rb{e^I_\alpha}\lcdm{_{\alpha \beta \gamma \delta}}.
\end{aligned}   
\end{equation}
So using \eref{eq:det_e_g_relation} and cancelling the density weight of the Levi-Civita symbol, we find 
\begin{equation}
    \lc{_{IJKL}}e^I_\alpha e^J_\beta e^K_\gamma e^L_\delta=-s_e\lc{_{\alpha \beta \gamma \delta}}.
\end{equation}
  
From this we can determine the action of the tangential Hodge star:
\begin{equation}
\begin{aligned}
    &\rb{\ast_{\mathrm{ST}}\Sigma}^{IJ}_{\alpha\beta}=\frac{1}{2}g_{\alpha\alpha'}g_{\beta\beta'}\lc{^{\alpha'\beta'\gamma\delta}}\frac{1}{2}\rb{e^{[I}_\gamma e^{J]}_\delta-\frac{i}{2}{\epsilon^{IJ}}_{KL}e^K_\gamma e^L_\delta}\\
    &=\frac{-s_e}{4}g_{\alpha\alpha'}g_{\beta\beta'}\rb{-\lc{^{IJKL}} e^{\alpha'}_K e^{\beta'}_L - \frac{i}{2} \lc{^{IJ}_{KL}} \lc{^{MNKL}} e^{\alpha'}_M e^{\beta'}_N   }\\
    &=\frac{-s_e}{4}g_{\alpha\alpha'}g_{\beta\beta'}\rb{-\lc{^{IJKL}} e^{\alpha'}_K e^{\beta'}_L + 2i \eta^{II'}\eta^{JJ'} e^{\alpha'}_{[I'}e^{\beta'}_{J']}}  \\
    &=-is_e \frac{1}{2}\rb{e^{[I}_\gamma e^{J]}_\delta-\frac{i}{2}{\epsilon^{IJ}}_{KL}e^K_\gamma e^L_\delta}
\end{aligned}    
\end{equation}
Note that we used the usual identities for the Levi-Civita symbol from the second to the third line. Therefore we have
\begin{equation}\label{eq:spacetime_antiselfduality}
\rb{\ast_{\mathrm{ST}}\Sigma}^{IJ}_{\alpha\beta}=-is_e \Sigma^{IJ}_{\alpha\beta}
\end{equation}
and hence internal selfduality is only carried over up to a minus sign and depends on the orientation of the complex frame fields. 

Before expressing the selfdual action in terms of $\sltwoC$ variables, we rewrite it as follows. As a first step we use the selfduality of the Plebanski two-form:
\begin{equation}
    S=\frac{\lambda}{\kappa}\int_\curlyM \epsilon_{IJKL}\Sigma^{IJ}\!\wedge F^{KL}
    =\frac{2i\lambda}{\kappa}\int_\curlyM \Sigma_{IJ}\!\wedge F^{IJ}.
\end{equation}
As a second step, we use \eref{eq:LC_general_delta_identity} to separate the four-form $\grad{^4x}=\frac{1}{4!}\lcdm{_{\mu\nu\rho\sigma}}\grad{x^\mu}\!\wedge\grad{x^\nu}\!\wedge\grad{x^\rho}\!\wedge\grad{x^\sigma}$ from the components: 
\begin{equation}
    S=-\frac{2i\lambda}{\kappa}\int_\curlyM\!\grad{^4x} \lcdp{^{\alpha\beta\gamma\delta}} \Sigma_{IJ\alpha\beta}\frac{1}{2}F^{IJ}_{\gamma\delta}.
\end{equation}
Further we employ \eref{eq:spacetime_antiselfduality} in order to get rid of the spacetime Levi-Civita symbol. This yields the final form of the selfdual action:
\begin{equation}\label{eq:selfdual_action_no_wedge}
    S=-\frac{2\lambda}{\kappa}\int_\curlyM \dxd{4}{x}\sqrt{-g}\, s_e\Sigma_{IJ}^{\alpha\beta} F^{IJ}_{\alpha\beta}.
\end{equation}
The determinant factor $\sqrt{-g}$ arises due to the transition from the Levi-Civita density to its tensor counterpart, needed for the spacetime Hodge-$\ast$.

The next step is to go over from $\laCp$- to $\sltwoC$-variables.

\subsubsection{$\SLtwoC$ variables and $3+1$ split}

In the contraction of internal indices in \eref{eq:selfdual_action_no_wedge}, there is an implicit selfdual projection. We can now use the first relation in \eref{eq:isomorphism_identities} in order to rewrite this contraction in terms of $\sltwoC$-indices:
\begin{equation}
    \Sigma_{IJ}^{\alpha\beta}\T{P}{^{IJ}_{KL}} F^{KL}_{\alpha\beta}=\Sigma_{IJ}^{\alpha\beta} \Pji{IJ}{i}\Pj{i}{KL} F^{KL}_{\alpha\beta}.
\end{equation}
This lets us define the $\sltwoC$-valued Plebanski bi-vector 
\begin{equation}
    \Sigma_i^{\alpha\beta}:=\Pji{IJ}{i}\Sigma_{IJ}^{\alpha\beta}.
\end{equation}
Further we have the action of the isomorphism on the curvature components, that yields the curvature of the $\SLtwoC$ connection 
\begin{equation}
    A^i_\alpha:=\Pj{i}{IJ}\omega^{IJ}_\alpha,
\end{equation}
which was introduced in section \ref{sec:From_Lorentz_to_SLtwoC_connections}.
We use the fact that the curvature components can be expressed in terms of the commutator of the $\laCp$ connection and hence find
\begin{equation}
\begin{aligned}
    \Pj{i}{KL} F^{KL}_{\alpha\beta}&=\Pj{i}{KL}2\partial_{[\alpha}\omega^{KL}_{\beta]}+\Pj{i}{KL}\commutator{\omega_\alpha}{\omega_\beta}^{KL}\\
    &=2\partial_{[\alpha}A_{\beta]}^i+\commutator{A_\alpha}{A_{\beta}}^i\\
    &=2\partial_{[\alpha}A_{\beta]}^i+\sce{_{jk}^i}A^j_\alpha A^k_\beta=: F^{i}_{\alpha\beta}[A].
\end{aligned}    
\end{equation}
This allows to recast the selfdual action, based on $\laCp$-variables, in terms of $\sltwoC$-valued $A^i_\alpha$ and $\Sigma^{\alpha\beta}_i$.

The next step is to perform a $3+1$-split of this action.
In the usual way, see e.\,g. \cite{ashtekar_lectures_1998,ashtekar_background_2004,thiemann_modern_2007}, we introduce a foliation
\begin{equation}
    t^\alpha=N n^\alpha + N^\alpha,\quad N^\alpha n_\alpha =0
\end{equation}
with respect to lapse function $N$, shift vector $N^a$ and the normal vector of spatial slices $n^a$. We want to emphasise that this foliation is real, i.\, e. lapse, shift and normal are real objects.  The spacetime manifold therefore splits into $\curlyM=M\times\RR$, where $M$ is the spatial manifold.
This gives rise to the spatial metric
\begin{equation}
    q_{\alpha\beta}=g_{\alpha\beta}+n_\alpha n_\beta,\quad q_{\alpha\beta}n^\beta=0.
\end{equation}
Projections to the spatial manifold are done by 
\begin{equation}
     q^\alpha_\beta=\delta^\alpha_\beta+n^\alpha n_\beta
\end{equation}
and hence a $3+1$ split can be performed by using the relation
\begin{equation}\label{eq:delta_spatial_projection}
     \delta^\alpha_\beta=q^\alpha_\beta-n^\alpha n_\beta.
\end{equation}
Regarding the metric determinant we have the following identity, which is still valid for the complex spacetime and spatial metrics:
\begin{equation}
    \sqrt{-g}=N\sqrt{q}.
\end{equation}
Inserting \eref{eq:delta_spatial_projection} in the contraction of spacetime indices in the action therefore yields
\begin{equation}
\begin{aligned}
    S&=-\frac{2\lambda}{\kappa}\int_\curlyM \dxd{4}{x} \sqrt{q} N s_e  \Sigma_{i}^{\alpha\beta} F^{i}_{\gamma\delta}(q_\alpha^\gamma-n_\alpha n^\gamma)(q_\beta^\delta-n_\beta n^\delta)\\
    &=-\frac{2\lambda}{\kappa}\int_\curlyM \dxd{4}{x} \sqrt{q} N s_e \Sigma_{i}^{\alpha\beta} F^{i}_{\gamma\delta} (q_\alpha^\gamma q_\beta^\delta-q_\alpha^\gamma n_\beta n^\delta-q_\beta^\delta n_\alpha n^\gamma + n_\alpha n_\beta n^\gamma n^\delta)\\
    &=-\frac{2\lambda}{\kappa}\int_\curlyM \dxd{4}{x} \sqrt{q} N s_e  \Brb{q_\alpha^\gamma q_\beta^\delta \Sigma_{i}^{\alpha\beta} F^{i}_{\gamma\delta} -2 q_\alpha^\gamma n_\beta n^\delta \Sigma_{i}^{\alpha\beta} F^{i}_{\gamma\delta}} \\
    &=-\frac{2\lambda}{\kappa}\int_\curlyM \dxd{4}{x}  \sqrt{q} s_e \Brb{ N q_\alpha^\gamma q_\beta^\delta \Sigma_{i}^{\alpha\beta} F^{i}_{\gamma\delta} + 2   q_\alpha^\gamma n_\beta N^\delta \Sigma_{i}^{\alpha\beta} F^{i}_{\gamma\delta} -2 t^\delta  q_\alpha^\gamma n_\beta \Sigma_{i}^{\alpha\beta} F^{i}_{\gamma\delta}}
\end{aligned}    
\end{equation}
From the second to the third line, we used the antisymmetry of $\Sigma^{i}_{\alpha\beta} F_{i}^{\gamma\delta}$ to combine the middle two terms in the bracket and to cancel the symmetric combination of $n$'s. From the third to the fourth line we used $Nn^\beta=t^\beta-N^\beta$. Using the easy to show relation $t^\alpha F^i_{\alpha\beta}= \mathop{\curlyL_t}(A^i_\beta)-\mathop{\curlyD_\beta} (t^\alpha A^i_\alpha)$, where $\curlyD$ is the covariant derivative with respect to $A^i_a$, and taking the antisymmetry of $F_{\alpha\beta}$ into account, we arrive at 
\begin{align}
    S=\int_\curlyM \dxd{4}{x}  \Bigl( &-\frac{4\lambda\sqrt{q} s_e}{\kappa} q_\alpha^\gamma n_\beta \Sigma^{\alpha\beta}_i \mathop{\curlyL_t}(A^i_\gamma)\label{eq:canonical_pair_piece}\\
    &-  \frac{2\lambda\sqrt{q} s_e}{\kappa} N  q_\alpha^\gamma q_\beta^\delta \Sigma_{i}^{\alpha\beta} F^{i}_{\gamma
\delta}\label{eq:hamiltonian_piece}\\
    &-  \frac{4\lambda\sqrt{q} s_e}{\kappa} N^\delta  q_\alpha^\gamma n_\beta \Sigma_{i}^{\alpha\beta} F^{i}_{\gamma\delta}\label{eq:diffeo_piece}\\
    &+ \frac{4\lambda\sqrt{q} s_e}{\kappa}  q_\alpha^\gamma n_\beta \Sigma_{i}^{\alpha\beta} \mathop{D_\gamma} (t^\delta A^i_\delta)\Bigr)\label{eq:gauss_piece}.
\end{align}
From the first we will extract the canonically conjugated variables and the symplectic structures. The terms proportional to lapse and shift will be identified as Hamilton and diffeomorphism constraints, respectively, and the last term will form the Gau{\ss} constraint. We will look at the individual contributions separately in the next sections.

\subsubsection{Canonical pair and Poisson relation}
We start with the piece \eref{eq:canonical_pair_piece}, from which we will get the symplectic structure and the actual canonical pair consisting of the Ashtekar connection $A^i_a$ and a corresponding electric field $E^a_i$.

Using $\mathop{\curlyL_t}(q^\alpha_\beta)=0$ and the spacetime selfduality of the Plebanski two-form we find that
\begin{equation}
\begin{aligned}
    &-\int_\curlyM \dxd{4}{x} \frac{4\lambda\sqrt{q} s_e}{\kappa} q_\alpha^\gamma n_\beta \Sigma^{\alpha\beta}_i \mathop{\curlyL_t}(A^i_\gamma) = \\
    & =-\int_\curlyM \dxd{4}{x} \frac{4\lambda\sqrt{q} s_e}{\kappa} \frac{i}{2 s_e}  n_\beta \lc{^{\alpha\beta}_{\gamma\delta}} \Sigma^{\gamma\delta}_i \mathop{\curlyL_t}(q_\alpha^\rho A^i_\rho).
\end{aligned}    
\end{equation}
Here we have $q_\alpha^\rho A^i_\rho$, the spatial projection of the spacetime Ashtekar connection. All open indices of $n_\beta \lc{^{\alpha\beta}_{\gamma\delta}}=-n_\beta \lc{^{\beta\alpha}_{\gamma\delta}}$ are implicitly spatially projected and so is therefore $\Sigma^{\gamma\delta}_i$. Hence we pull everything back to the spatial manifold and replace the spacetime Levi-Civita tensor by its spatial analogue $\lc{^{a}_{ab}}$. This yields 
\begin{equation}\label{eq:symplectic_structure_AE}
    \begin{aligned}
        &\int_\RR\grad{t}\int_M \grad{^3x} \frac{2\lambda\sqrt{q} i}{\kappa}   \lc{^{a}_{bc}} \Sigma^{bc}_i \mathop{\curlyL_t}(A^i_a)=\\
        &=\int_\RR\grad{t}\int_M \grad{^3x}\frac{2\lambda}{\kappa i} \tE^a_i\mathop{\curlyL_t}(A^i_a),
    \end{aligned}
\end{equation}
where $A^i_a$ is the pullback of $q_\alpha^\rho A^i_\rho$ to the spatial manifold. It is the spatial $\sltwoC$-valued, hence complex, Ashtekar connection. Recalling from where we started, this is the pullback of the selfdual part of a complex spacetime Lorentz connection, expressed in terms of $\sltwoC$. In comparison, the real Ashtekar-Barbero connection is a purely spatial object and is not the pullback of  a spacetime connection\footnote{~See for example \cite{thiemann_modern_2007}.}.

Further we introduced the electric field $\tE^a_i$, a density of weight one, corresponding to the Ashtekar connection. It is given by
\begin{equation}\label{eq:electric_field}
    \tE^a_i=-\sqrt{q}\lc{^a_{bc}}\Sigma^{bc}_i. 
\end{equation}

Now we can read off the  symplectic structure and invert it for a Poisson relation
\begin{equation}\label{eq:Poisson_Relation_AE}
    \PB{A^i_a(x)}{\tE^b_j(y)}=\frac{i\kappa}{2\lambda}\delta^b_a\delta^i_j\delta(x,y).
\end{equation}
This tells us that the canonically conjugated variables of selfdual general relativity are the $\sltwoC$ connection $A^i_a$ and the corresponding electric field $E^a_i$ which is complex valued as well. 

At this point we see the effect of rescaling the action with $\lambda$, which enters our Poisson relation. There are now two suggestive cases and these are the only ones of interest throughout the rest of this paper. We look at $\lambda=1$ and $\lambda=i$. With this we have two distinct Poisson relations:
\begin{equation}\label{eq:PB_with_i}
    \PB{A^i_a(x)}{\tE^b_j(y)}=\frac{i\kappa}{2}\delta^b_a\delta^i_j\delta(x,y),
\end{equation}
if $\lambda=1$ or 
\begin{equation}\label{eq:PB_without_i}
    \PB{A^i_a(x)}{\tE^b_j(y)}=\frac{\kappa}{2}\delta^b_a\delta^i_j\delta(x,y)
\end{equation}
if $\lambda=i$. The first case \eref{eq:PB_with_i} is the Poisson relation that is expected when using selfdual variables, see for example \cite{ashtekar_lectures_1998,ashtekar_gravitational_2021}. In the second case however, we enforce the Poisson relation as for real Ashtekar-Barbero variables. Nevertheless, connection and electric field in \eref{eq:PB_without_i} are complex. In \cite{thiemann_account_1995,thiemann_reality_1996} this kind of Poisson relation for complex variables appears in the context of using complexifier methods in order to construct a transformation  from real to complex LQG, which already incorporates the reality conditions.  
A further instance of the use of this Poisson bracket for complex variables can be found in \cite{wieland_complex_2012}, but under a different premise.

Before we go on and show that the electric field indeed act as triads of the spatial manifold, we need to analyse this complex Poisson relation in more detail. We do this in the next section.

\subsubsection{Holomorphicity}
Looking at the term \eref{eq:symplectic_structure_AE} giving the symplectic potential, we get a kinematical phase space in which Ashtekar connection and the electric field are canonical coordinates. Both are a priori complex. The question arises whether it is possible to derive from \eref{eq:symplectic_structure_AE} a symplectic structure for real and imaginary parts of the connection and the electric field.

We split $A^i_a={A_R}^i_a+i{A_I}^i_a$ and $\tilde{E}^b_j={\tilde{E}_R}^b_j+i{\tilde{E}_I}^b_j$ into the respective real and imaginary parts and insert this into  \eref{eq:symplectic_structure_AE}:
\begin{equation}
    \tilde{E}^a_i \curlyL_t {A}^i_a= {\tilde{E}_R}^a_i \curlyL_t {A_R}^i_a -{\tilde{E}_I}^a_i \curlyL_t {A_I}^i_a+i{\tilde{E}_I}^a_i \curlyL_t {A_R}^i_a +i{\tilde{E}_R}^a_i \curlyL_t {A_I}^i_a. 
\end{equation}
From this we determine a pre-symplectic structure for the variations $\delta A^i_a$, $\delta\tE^a_i$ split into their respective real an imaginary parts $\delta^\mathrm{R}A^i_a$, $\delta^\mathrm{I}A^i_a$ and  $\delta^\mathrm{R}\tE^a_i$, $\delta^\mathrm{I}\tE^a_i$, respectively. It is given by
\begin{equation}\Omega(\delta_1^\mathrm{R},\delta_1^\mathrm{I},\delta_2^\mathrm{R},\delta_2^\mathrm{I})=\int_\Sigma\grad{^3x}  \delta^\mathrm{T}_1(x)M\delta_2(x),
\end{equation}
where $\delta_i^\mathrm{T}:=(\delta_i^\mathrm{R}A^1_1,\delta_i^\mathrm{I}A^1_1,\delta_i^\mathrm{R}\tE^1_1$, $\delta_i^\mathrm{I}\tE^1_1,\dots)$. The $12\times 12$-Matrix $M$, characterising the presymplectic form is given by
\begin{equation}\label{eq:symplectic_matrix}
    M:=\II_3\otimes\begin{pmatrix} 0 & 0 & 1 & i  \\ 
                        0 & 0 & i & -1 \\
                        -1 & -i & 0 & 0 \\
                        -i & 1 & 0 & 0 \end{pmatrix}.
\end{equation}
The corresponding Poisson relation for phase space functions $F$ and $G$ would therefore be of the form 
\begin{equation}
    \PB{F}{G}=\int_\Sigma\grad{^3x}\, ({\nabla F})^\mathrm{T}(x)\, M^{-1}\,\nabla G(x),
\end{equation}
where $\nabla F$ and $\nabla G$ are the collections of functional derivatives, according to 
\begin{equation}
    \nabla^\mathrm{T}:=\rb{\frac{\delta}{\delta {A^\mathrm{R}}^1_1(x)},\frac{\delta}{\delta {A^\mathrm{I}}^1_1(x)},\frac{\delta}{\delta {\tE^\mathrm{R}}^1_1(x)},\frac{\delta}{\delta {\tE^\mathrm{I}}^1_1(x)}, \dots}.
\end{equation}
As the determinant of the right tensor factor in \eref{eq:symplectic_matrix} is clearly vanishing, the determinant of $M$ is vanishing and it is not invertible.
Hence the presymplectic form is not symplectic and it is not possible to invert it for a corresponding Poisson bracket for real and imaginary parts of connection and electric field. 

Without adding additional Poisson brackets to the theory by hand, it is not possible to rewrite the selfdual Palatini action in terms of real quantities and we have to use complex Ashtekar variables \cite{ashtekar_lectures_1998}.\footnote{~An approach that adds these missing Poisson brackets and therefore works with real an imaginary parts of the canonical variables is \cite{wieland_complex_2012}.}
This further implies that the phase space of selfdual GR can only consist of functions which are holomorphic in both, $A^i_a$ and $\tE^a_i$, and that the Poisson brackets must be understood with respect to holomorphic derivatives. This is the way we want to proceed in the following. 

There is also an immediate consequence for a possible implementation of reality conditions. These conditions would necessarily involve the basic fields and their complex conjugates, hence they would not be functions in the holomorphic phase space described above. Working in a holomorphic setup excludes the possibility to introduce reality conditions as additional constraints.\footnote{~An example for a phase space that includes the basic fields and their conjugates see \cite{wieland_complex_2012}.}
They can only be implemented by hand, for example as adjointness relations.  

\subsubsection{Spatial metric and triads}
\label{sec:spatial_metric_tiads}
One of the beautiful results of the Ashtekar formulation of gravity is that the electric fields serve as densitised triads for the spatial metric, with respect to an internal three metric. Up to now, we refrained from specifying an internal metric, used to move $\sltwoC$ indices $i=1,2,3$, because no such movements of indices took place. 

Without an internal metric, the natural way of contracting internal indices is the Cartan-Killing metric of $\sltwoC$. As the definition of electric fields includes the inverse isomorphism, the inverse Cartan-Killing metric $k^{ij}$ of $\sltwoC$ can be expressed by its $\laCp$ counterpart $k^{IJKL}$ according to \eref{eq:isomorphism_CKmetrics}. With \eref{eq:selfdual_CKm_projector} we therefore find
\begin{equation}\label{eq:EE_expanded_SigmaSigma}
    \tE^a_i \tE^b_j k^{ij}=(-\sqrt{q})^2 \lc{^a_{cd}}\lc{^b_{ef}}k^{ij}\Pji{IJ}{i}\Pji{KL}{j}\Sigma^{cd}_{IJ}\Sigma^{ef}_{KL}=-\frac{1}{2}q \lc{^a_{cd}}\lc{^b_{ef}} P^{IJKL}\Sigma^{cd}_{IJ}\Sigma^{ef}_{KL}.
\end{equation}
As the the pairs of Lorentz indices are already selfdual, the projection act trivially and just moves one pair of indices. Hence we look at 
\begin{equation}
\begin{aligned}
    &\lc{^a_{cd}}\lc{^b_{ef}}\Sigma^{cd}_{IJ}\Sigma^{efIJ}=\\
    &=\frac{1}{4}\lc{^a_{cd}}\lc{^b_{ef}}\Brb{e_{cI} e_{dJ}-\frac{i}{2}\lc{_{IJKL}}e^K_c e^L_d}\Brb{e^{I}_e e^{J}_f-\frac{i}{2}\lc{^{IJ}_{MN}}e^M_e e^N_f}\\
    &=\frac{1}{4}\lc{^a_{cd}}\lc{^b_{ef}}\Bigl(
    \begin{aligned}[t]
    &e_{cI} e^I_e e_{dJ} e^J_f - \frac{i}{2} e_{cI}e_{dJ}\lc{^{IJ}_{MN}}e^M_e e^N_f \\ &-\frac{i}{2} \lc{_{IJKL}}e^K_c e^L_d e^I_e e^J_f - \frac{1}{4} \lc{_{IJKL}}e^K_c e^L_d \lc{^{IJ}_{MN}}e^M_e e^N_f\Bigr).\end{aligned}
\end{aligned}
\end{equation}
Here the two middle terms vanish since $\lc{_{IJKL}}e^I_a e^J_b e^K_c e^L_d= \lc{^{abcd}}=0$ because of four spatial indices. Contracting the Levi-Civita symbols further yields
\begin{equation}
\begin{aligned}
    &\frac{1}{4}\lc{^a_{cd}}\lc{^b_{ef}}\Brb{
    e_{cI} e^I_e e_{dJ} e^J_f +  \delta_{K[M}\delta_{N]K} e^K_c e^L_d e^M_e e^N_f
    }\\
    &=\frac{1}{4}\lc{^a_{cd}}\lc{^b_{ef}} \rb{q_{ce}q_{df}+q_{ce}q_{df}}\\
    &=\frac{1}{2 q} \lcdp{^{acd}}\lcdp{^{bef}}q_{ce}q_{df}=q^{ab}.
\end{aligned}    
\end{equation}
So indeed this complex electric field encoding the selfdual degrees of freedom contains the spatial metric. Inserting this back in \eref{eq:EE_expanded_SigmaSigma} and solving for the spatial metric yields 
\begin{equation}
    q q^{ab}=-2\tE^a_i \tE^b_j k^{ij}.
\end{equation}
This in turn suggest that the (inverse) internal three metric, related to the spatial metric by densitized triads $\tE^a_i$, is indeed
\begin{equation}
    -2k^{ij}=-2\rb{-\frac{1}{2}}\delta^{ij}=\delta^{ij}.
\end{equation}
The internal metric is therefore given by the Euclidean three metric $\delta_{ij}$, which we will use henceforth. This is in fact remarkable. In the derivation of Ashtekar-Barbero gravity from a real Palatini formulation, time-gauge is used to align the frame fields with the foliation. Consequently the structure group is reduced to $\SUtwo$.
In fixing this gauge, the internal Minkowski metric is reduced to its spatial part, which is exactly the Euclidean metric. Here however, we end up with exactly the same internal three metric, but no time gauge has been performed.
It is easy to see that the split of $\laC$ into its (anti-)selfdual components is incompatible with stabilising a gauge fixing similar to the time gauge. As we discuss the number of physical degrees of freedom later on, it turns out, we do not even need an additional gauge fixing. 

Having established the relation
\begin{equation}\label{eq:densitised_metric}
    q q^{ab}=\tE^a_i \tE^b_j \delta^{ij},
\end{equation}
we see that the electric fields can be interpreted as densitised triads of the spatial manifold, even though they carry spacetime information. This is contained in the components $e^0_a$, that enters via the Plebanski bi-vector.

We can now use \eref{eq:densitised_metric} in order to invert the electric field:
\begin{equation}\label{eq:inverse_electric_field}
    (\tE^{-1})^i_a=\frac{1}{q}q_{ab}\delta^{ij}\tE^b_j=\Et^i_a,
\end{equation}
which is a density of weight $-1$. The spatial metric therefore is given by 
\begin{equation}
    q_{ab}=q \delta_{ij}\Et^i_a\Et^j_b.
\end{equation}
In order to fix the notation, we strip the electric fields off their density weight in order to get the actual triads
\begin{equation}
\begin{aligned}
    \Sigma^a_i&=\frac{1}{\sqrt{q}}\tE^a_i=-\lc{^a_{bc}}\Sigma^{bc}_i=-\lc{^a_{bc}}\rb{2\Pji{0j}{i}e^b_0e^c_j+\Pji{kl}{i}e^b_ke^c_l}\\ &=-\lc{^a_{bc}}\rb{i e^b_0e^c_i+\frac{1}{2}\lc{_i^{kl}}e^b_ke^c_l}.
\end{aligned}    
\end{equation}
Here we used the explicit form of the isomorphism in \eref{eq:explicit_isomorphism} and identified uppercase Lorentz indices that run from $1$ to $3$ with lowercase indices. We see how the spacetime components $e_0^a$ enter the triads in the first term, while the second term looks similar to the result in Ashtekar-Barbero gravity\footnote{~Using the determinant of the three by three matrix $e^a_i$, the second term can be expressed by a single $e^a_i$. However, the appearing determinant is not related to the determinant of the spatial metric.}.

\subsubsection{Constraints}
In this section we look at the three other term in the selfdual Palatini action and want to bring them in their standard form, while expressing them in terms of the canonical variables. 

For the Hamilton constraint part \eref{eq:hamiltonian_piece} of the action, we see that the contraction between Plebanski two-form and curvature is fully projected and we can pull back the entire expression to the spatial manifold. Therefore
\begin{equation}\label{eq_Hamilton_constrain_aux}
\begin{aligned}
    &- \int_\curlyM \dxd{4}{x} \frac{2\lambda\sqrt{q} s_e}{\kappa} N  q_\alpha^\gamma q_\beta^\delta \Sigma_{i}^{\alpha\beta} F^{i}_{\gamma\delta}=\\
    &=-\int_\RR\grad{t}\int_M \grad{^3x}\frac{2\lambda\sqrt{q} s_e}{\kappa} N   \Sigma_{i}^{ab} F^{i}_{ab}\\
    &=\int_\RR\grad{t}\int_M\grad{^3x} \frac{\lambda N s_e}{\kappa}\lc{^{ab}_c}\tE^c_i F^i_{ab}.
\end{aligned}
\end{equation}
In the last step we we inverted the spatially projected Plebanski bi-vector for the electric field:
\begin{equation}
    \Sigma^{ab}_i
=-\frac{1}{2\sqrt{q}}\lc{^{ab}_c}\tE^c_i.
\end{equation}
Next we want to replace $\lc{^{ab}_c}\tE^c_i F^i_{ab}$ by $\lc{_i^{kl}}\tE^a_k\tE^b_l F^i_{ab}$. For the determinant of electric fields, it holds that 
\begin{equation}
    \lcdm{_{abc}}\tE^c_i=\det(\tE)\lc{_{ijk}}\Et^j_a\Et^k_b. 
\end{equation}
We multiply this by another factor $\det(\tE)$ and replace the spatial Levi-Civita density by the corresponding tensor, i.\,e.
\begin{equation}\label{eq:eE_to_eEE_aux}
    \frac{\det(\tE)}{\sqrt{q}}\lc{_{abc}}\tE^c_i=\det(\tE)^2\lc{_{ijk}}\Et^j_a\Et^k_b. 
\end{equation}
The relation \eref{eq:densitised_metric} further allows to relate the determinants of spatial metric $q$ and electric fields $\det(\tE)$. Just taking the determinant on both sides yields
\begin{equation}
    q^2=\det(\tE)^2.
\end{equation}
Completely analogous to \eref{eq:sign_det_e} for tetrads, we introduce $s_E$ as the complex sign of $\det(\tE)$, and write 
\begin{equation}
    \det(\tE)=s_E\sqrt{\det(\tE)^2}.
\end{equation}
Because of this we find 
\begin{equation}
    \frac{\det(\tE)}{\sqrt{q}}=s_E \sqrt{\frac{\det(\tE)^2}{q}}=s_E\sqrt{q}.
\end{equation}

Hence, inverting the inverse triads on the right hand side, \eref{eq:eE_to_eEE_aux} takes the form
\begin{equation}
    \lc{^{ab}_c}\tE^c_i=\frac{1}{\sqrt{q}s_E}\lc{_{i}^{jk}}\tE^a_j\tE^b_k. 
\end{equation}

Inserting this into \eref{eq_Hamilton_constrain_aux} yields
\begin{equation}
    \int_\RR\grad{t}\int_M\grad{^3x}\frac{N\lambda s_e}{\kappa}\lc{^{ab}_c}\tE^c_i F^i_{ab}=\int_\RR\grad{t}\int_M \grad{^3x} \frac{N\lambda s_e}{\sqrt{q} \kappa s_E} \lc{_{k}^{ij}}\tE^a_i\tE^b_j F^k_{ab}.
\end{equation}
Hence we can read off the Hamilton constraint $C$
\begin{equation}\label{eq:Hamilton_constraint} 
    N C = -N\frac{\lambda s_e}{\sqrt{q} \kappa s_E} \lc{_{k}^{ij}}\tE^a_i\tE^b_j F^k_{ab}.
\end{equation}
Comparing this to the Hamilton constraint of GR in Ashtekar-Barbero variables, there are two important aspects to remark. First the constraint here consists only of what is usually called the Euclidean Hamilton constraint. The so called Lorentzian part of the Hamilton constraint of real GR in Ashtekar-Barbero variables is not present. This is the reason why GR in selfdual variables is expected to have simpler dynamics. 
The second aspect is the presence of $s_e/s_E$, which encodes the the orientation of both triads and tetrads, which are complex quantities. Therefore right from the beginning it is not clear what choosing an orientation of these would actually mean.

We go on with the diffeomorphism constraint part of the action, i.\,e. \eref{eq:diffeo_piece}. Here we see that 
\begin{equation}
    N^\delta=N^\alpha\delta^\delta_\alpha=N^\alpha q^\delta_\alpha-N^\alpha n_\alpha n^\delta=N^\alpha q^\delta_\alpha
\end{equation}
as $N^\alpha n_\alpha=0$. The shift vector is indeed spatially projected. Hence, using again spacetime selfduality and performing the pullback to the spatial manifold 
\begin{equation}
\begin{aligned}
    &-\int_\curlyM \dxd{4}{x}    \frac{4\lambda\sqrt{q} s_e}{\kappa} N^\delta  q_\alpha^\gamma n_\beta \Sigma_{i}^{\alpha\beta} F^{i}_{\gamma\delta}\\
    &=- \int_\curlyM \dxd{4}{x}\frac{4\lambda\sqrt{q} s_e}{\kappa} N^\lambda q^\delta_\lambda \frac{i}{2s_e}n_\beta \lc{^{\alpha\beta}_{\rho\sigma}} \Sigma_{i}^{\rho\sigma} F^{i}_{\gamma\delta}\\
    &=\int_\RR\grad{t}\int_M \grad{^3x} \frac{2\lambda\sqrt{q} i}{\kappa} N^a  \lc{^{d}_{bc}} \Sigma_{i}^{bc} F^{i}_{da}\\
    &=\int_\RR\grad{t}\int_M \grad{^3x} \frac{2\lambda i}{\kappa} N^a  \tE^b_i F^{i}_{ab},
\end{aligned}    
\end{equation}
where we again expressed everything by curvature and electric field. The diffeomorphism constraint $C_a$ now reads
\begin{equation}\label{eq:diffeo_constraint}
    N^a C_a = \frac{2\lambda}{i\kappa} N^a  \tE^b_i F^{i}_{ab}.
\end{equation}
Likewise to the canonical pair, the dependence on the orientation of tetrads disappeared. 

Finally we work out the Gau{\ss} constraint from \eref{eq:gauss_piece}. Once again employing spacetime selfduality and using that the open indices of $n_\beta\lc{^{\beta\alpha}_{\rho\sigma}}$ are already projected (making additional projections obsolete), we find
\begin{equation}
    \begin{aligned}
        &\int_\curlyM \dxd{4}{x} \frac{4\lambda\sqrt{q} s_e}{\kappa}  q_\alpha^\gamma n_\beta \Sigma_{i}^{\alpha\beta} \mathop{D_\gamma} (t^\delta A^i_\delta)=\\
        &=-\int_\curlyM \dxd{4}{x} \frac{4\lambda\sqrt{q} s_e}{\kappa} \frac{i}{2 s_e}  n_\beta \lc{^{\beta\gamma}_{\rho\sigma}} \Sigma_{i}^{\rho\sigma} \mathop{D_\gamma} (t^\delta A^i_\delta)\\
        &=\int_\curlyM \dxd{4}{x} \frac{2\lambda\sqrt{q} }{i\kappa}  n_\beta \lc{^{\beta\gamma}_{\rho\sigma}} \Sigma_{i}^{\rho\sigma} \mathop{D_\gamma} (t^\delta A^i_\delta).
    \end{aligned}
\end{equation}
Next we use the Leibniz rule on the covariant derivative. This produces the term 
\begin{equation}
    \mathop{D_\gamma} \rb{
    \frac{2\lambda\sqrt{q} }{i\kappa}  n_\beta \lc{^{\beta\gamma}_{\rho\sigma}} \Sigma_{i}^{\rho\sigma} t^\delta A^i_\delta},
\end{equation}
which is the covariant derivative of a vector density of weight one. The covariant derivative can therefore be replaced by a partial derivative, we obtain a surface term and drop it. Hence we are left with 
\begin{equation}
\begin{aligned}
    &-\int_\curlyM \dxd{4}{x} (t^\delta A^i_\delta)\mathop{D_\gamma}\rb{\frac{2\lambda\sqrt{q} }{i\kappa}  n_\beta \lc{^{\beta\gamma}_{\rho\sigma}} \Sigma_{i}^{\rho\sigma} t^\delta A^i_\delta}=\\
    &= \int_\RR\grad{t}\int_M \grad{^3x}  (t^\delta A^i_\delta)\mathop{D_a}\rb{-\frac{2\lambda\sqrt{q} }{i\kappa}  \lc{^{a}_{bc}} \Sigma_{i}^{bc}} \\
    &= \int_\RR\grad{t}\int_M \grad{^3x}  (t^\delta A^i_\delta) \frac{2\lambda }{i\kappa} \mathop{D_a}(\tE^a_i),
\end{aligned}    
\end{equation}
where we performed the pullback and introduced the electric field. The derivative $D_a$ is the spatial projection of $D_\alpha$ and hence the covariant derivative with respect to $A_a^i$. With $(A\cdot t)^i:=t^\delta A^i_\delta$ The Gau{\ss} constraint $G_i$ reads
\begin{equation}
    (A\cdot t)^i G_i = - \frac{2\lambda }{i\kappa} \mathop{D_a}(\tE^a_i).
\end{equation}
It is easy to see that this generates indeed $\SLtwoC$ gauge transformations. 

Unlike the Hamilton constraint, the diffeomorphism and Gau{\ss} constraints are of exactly the same form as in the real theory, cf. \cite{ashtekar_background_2004,thiemann_modern_2007}.

Putting all pieces of the action together, we therefore have 
\begin{equation}
    S=\int_\RR\grad{t}\int_M \grad{^3x} \rb{\frac{2\lambda}{\kappa i} \tE^a_i\mathop{\curlyL_t}(A^i_a)-\rb{NC+N^aC_a+(A\cdot t)^i G_i}}.
\end{equation}

Once again, we can comment on the consequences of rescaling the action by $\lambda$. As we just established, all the constraints are rescaled by a factor $\lambda$, while the Poisson bracket is rescaled by a factor $1\lambda$. Therefore the equations of motion obtained from this action are unchanged as
\begin{equation}
    \PB{\slotdot}{C_{(\lambda)}}_{(\lambda)}=\frac{1}{\lambda}\PB{\slotdot}{\lambda C}=\PB{\slotdot}{C},
\end{equation}
for all types of constraints. Similarly, the form of the hypersurface deformation algebra is unaffected. 

Having established all the constraints, we now want to see how many physical degrees of freedom from complexified GR are left. We will do this on the $\sltwoC$ level of the formulation only and count complex degrees of freedom. The configuration variable $A^i_a$ encodes 9 complex degrees of freedom. The Hamiltonian constraints reduces this number by one, diffeomorphism and Gau{\ss} constraint reduce by 3 respectively. Therefore we are left with 2 complex degrees of freedom. Expecting the reality conditions to halve this again, we end up with two physical and real degrees of freedom for gravity.
As mentioned earlier, we do not need an additional gauge fixing (time gauge) in order to obtain the correct number of degrees of freedom.

\section{Reality conditions}\label{sec:reality_conditions}
As already mentioned in the introduction, the reality conditions suggested throughout the literature, cf. \cite{ashtekar_new_1989,ashtekar_lectures_1998,ashtekar_gravitational_2021} are reality of the spatial metric and reality of its time evolution. The first reality condition (RCI) therefore is
\begin{equation}\label{eq:RCI_metric}
    q_{ab}\in \RR.
\end{equation}
With the dot indicating the derivative with respect to the foliation, the second reality condition (RCII) is
\begin{equation}\label{eq:RCII_metric}
    \dot{q}_{ab}\in \RR.
\end{equation}

Various formulations of the reality conditions have been considered in the literature, most recently in \cite{ashtekar_gravitational_2021}, which also includes the treatment of the Euclidean signature. In the following, we will consider the case of general $\lambda$ and give rather detailed derivations.

The reality conditions \eref{eq:RCI_metric} and $\eref{eq:RCII_metric}$ have to be translated into conditions for the selfdual variables $E$ and $A$. It can then be shown that under these reality conditions and in the absence of matter, the theory reduces to real Einstein gravity. 
Starting point for this is the densitised inverse spatial metric \eref{eq:densitised_metric}.

\subsection{First reality condition}\label{sec:RCI}

Before we start with the first reality condition we present the concept of reality and imaginarity up to gauge rotations introduced in \cite{ashtekar_gravitational_2021}. 

We call objects real up to a gauge rotations if their imaginary part can be removed by a rotation. Similarly we call objects imaginary up to a gauge rotations if their real part can be removed by a rotation. In our case here rotations mean complex orthogonal transformation of $\mathrm{O}(3,\CC)$.

With this at hand we can define reality of the spatial metric with respect to the electric field.

\begin{proposition}
The following statements are equivalent:\begin{enumerate}[label=(\roman*)]
    \item \label{thm:E_real} $\tE^a_i$ is real modulo gauge rotations, 
    \item \label{thm:qq_posdef} $qq^{ab}=\delta^{ij}\tE^a_i\tE^b_j$ is real and positive definite, 
    \item \label{thm:q_real} $q^{ab}$ is real.
\end{enumerate}
\end{proposition}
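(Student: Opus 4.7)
The plan is to prove the three statements equivalent by establishing a cycle $(i) \Rightarrow (ii) \Rightarrow (iii) \Rightarrow (i)$. Throughout the argument it is convenient to regard $\tE^a_i$ as a $3\times 3$ complex matrix $E$, with the spatial index $a$ labelling rows and the internal index $i$ labelling columns. Then the defining identity \eref{eq:densitised_metric} reads $qq^{ab} = (EE^T)^{ab}$, and a gauge rotation acts on $E$ by multiplication from the right by some $R \in \mathrm{O}(3,\CC)$, i.e., a matrix satisfying $RR^T = \unitelement$.

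For $(i) \Rightarrow (ii)$: by assumption there exists $R \in \mathrm{O}(3,\CC)$ with $ER$ real. Then $EE^T = (ER)(ER)^T$ is the Gram matrix of the real column vectors of $ER$, hence it is real and positive semi-definite; non-degeneracy of the triad (equivalently, $\det E \neq 0$) promotes it to positive definite.

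For $(ii) \Rightarrow (iii)$: in three dimensions $\det(qq^{ab}) = q^3\det(q^{ab}) = q^3\cdot q^{-1} = q^2$. Since $qq^{ab}$ is real and positive definite its determinant is a positive real number, so $q^2 > 0$ and $q$ is real (with a conventional sign choice). Dividing yields $q^{ab} = (qq^{ab})/q \in \RR$. For $(iii) \Rightarrow (i)$ the key step is a complex polar-type decomposition: taking $q^{ab}$ real (and positive definite, as is automatic for a spatial Riemannian metric), $qq^{ab} = EE^T$ is real symmetric positive definite, so the spectral theorem yields a unique real symmetric positive definite square root $M$ with $M^2 = EE^T$. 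Define the rotation
\begin{equation}
    R := E^{-1} M,
\end{equation}
which is well-defined because $\det E \neq 0$. A direct computation shows
\begin{equation}
    R^T R = M\, E^{-T} E^{-1}\, M = M\, (EE^T)^{-1}\, M = M\, M^{-2}\, M = \unitelement,
\end{equation}
so $R \in \mathrm{O}(3,\CC)$, and $ER = M$ is manifestly real, proving (i).

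The main technical obstacle is less the algebra than the careful bookkeeping of reality/positivity: one has to make sure that the square root $M$ exists as a real symmetric positive definite matrix (which requires $qq^{ab}$ to be genuinely positive definite, not merely non-degenerate) and one must fix the overall sign of $q$ coherently with the choice of $s_E$ introduced in section~\ref{sec:SelfdualPalatiniAction}. In particular, the subtle point in $(iii) \Rightarrow (i)$ is not proving $R$ is orthogonal, which is a one-line calculation, but guaranteeing that the Riemannian character of $q^{ab}$ is available so that the square root $M$ can be chosen real; this is why the formulation in (ii) explicitly includes positive definiteness as part of the condition, rendering the equivalence clean.
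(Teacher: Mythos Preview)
Your proof is correct and follows the same cycle $(i) \Rightarrow (ii) \Rightarrow (iii) \Rightarrow (i)$ as the paper, with essentially identical arguments in each step. The only minor difference is that in $(iii) \Rightarrow (i)$ you explicitly construct the real triad as the positive-definite square root $M$ via the spectral theorem, whereas the paper simply asserts the existence of a real ${\tE_R}^a_i$ reproducing $qq^{ab}$ and then shows (by the same orthogonality computation you give) that any complex triad differs from it by an $\mathrm{O}(3,\CC)$ rotation.
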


\begin{proof}
\ref{thm:E_real} $\Rightarrow$ \ref{thm:qq_posdef}: 
Let the electric field be real modulo gauge rotations, i.\,e. $\tE^a_i=\T{R}{^j_i}{\tE_R}^a_j$ with ${\tE_R}^a_j\in\RR$. Since the Euclidean metric is invariant under rotations, we find 
\begin{equation}
    \begin{aligned}
        \tE^a_i\tE^b_j\delta^{ij}&= {\tE_R}^a_m{\tE_R}^b_n  \T{R}{^m_i}\T{R}{^n_j}\delta^{ij}\\
        &={\tE_R}^a_m{\tE_R}^b_n \delta^{mn}.
    \end{aligned}
\end{equation}
This is positive definite because $\delta_{ij}$ is positive definite. 

\ref{thm:qq_posdef} $\Rightarrow$ \ref{thm:q_real}: 
Let $qq^{ab}$ be positive definite. Then the determinant 
\begin{equation}
    \det(qq^{ab})=q^3\frac{1}{q}=q^2
\end{equation}
is strictly positive. Therefore, decomposing $q=q_R+iq_I$ with $q_R,q_I\in\RR$ into real and imaginary part, 
\begin{equation}
    0<q^2={q_R}^2-{q_I}^2+2iq_R q_I.
\end{equation}
This implies $q_I=0$, i.\,e. $q\in\RR$. Hence $q^{ab}$ is real. 

\ref{thm:q_real} $\Rightarrow$ \ref{thm:E_real}:
Let $q^{ab}\in\RR$, hence $\exists$ ${\tE_R}^a_i\in\RR$ such that $qq^{ab}={\tE_R}^a_i{\tE_R}^b_j\delta^{ij}$. For any possibly complex ${\tE'}^a_i$ with ${\tE'}^a_i{\tE'}^b_j\delta^{ij}=qq^{ab}$ it therefore must hold  true that
\begin{equation}
\begin{aligned}
    &{\tE'}^a_i{\tE'}^b_j\delta^{ij}={\tE_R}^a_i{\tE_R}^b_j\delta^{ij}\\
    \Leftrightarrow &{\tE'}^a_i {\tE'}^b_j{(\tE^{-1}_R)}_a^m {(\tE^{-1}_R)}_b^n\delta^{ij}= \delta^{mn}\\
    \Leftrightarrow & (\tE'\tE^{-1}_R)^m_i (\tE'\tE^{-1}_R)^n_j\delta^{ij}=\delta^{mn}.
\end{aligned}    
\end{equation}
This implies $(\tE'\tE^{-1}_R)$ is a rotation, which leaves $\delta_{ij}$ invariant. Hence 
\begin{equation}
    \tE^a_i=\T{R}{^j_i}{\tE_R}^a_j,
\end{equation}
i.\,e. $\tE^a_i$ is real up to a gauge rotation. 
\end{proof}

Dropping positive definiteness in \ref{thm:qq_posdef} would allow for a second solution, namely $q=iq_I$. This however would imply a purely imaginary metric. 

When neglecting the imaginary part that is removable by a gauge rotation, the reality of the electric field $\tE^a_i$, according to the first reality condition, we see that it corresponds to an infinitesimal rotation.

\subsection{Second reality condition}\label{sec:RCII}
Before we further analyse the second reality condition, we need the complex analogue of the extrinsic curvature $K^i_a$. To this end we derive the covariant derivative that is compatible with the electric field. Keeping in mind that $\tE^a_i$ is a density of weight 1, we have
\begin{equation}
    \mathop{\curlyD_a}\tE^b_j=\partial_a\tE^b_j+\T{\Gamma}{^b_{ac}}\tE^c_j-\T{\Gamma}{^c_{ca}}\tE^b_j+\lc{_{ji}^{k}}\Gamma^i_a\tE^b_k=0.
\end{equation}
We want to solve this for the spin connection $\Gamma^i_a$. Thus we multiply by the inverse electric field $\Et^m_b$ and contract with the structure constant. This yields 
\begin{equation}
    \lc{^{jk}_m}\Et^m_b\partial_a\tE^b_j+\lc{^{jk}_m}\T{\Gamma}{^b_{ac}}\Et^m_b\tE^c_j-\lc{^{jk}_j}\T{\Gamma}{^c_{ca}}+\lc{^{jk}_m}\lc{_{ji}^{m}}\Gamma^i_a=0.
\end{equation}
Since $\lc{^{jk}_m}\lc{_{ji}^{m}}=2\delta^k_i$ and $\lc{^{jk}_m}=0$,  the contribution for the density weight in the covariant derivative does not affect the spin connection. This yields
\begin{equation}\label{eq:spin_connection_Gamma}
    \Gamma^k_a=\frac{1}{2}\lc{^{kj}_m}\rb{\Et^m_b\partial_a\tE^b_j+\T{\Gamma}{^b_{ac}}\Et^m_b\tE^c_j}.
\end{equation}
Assuming the first reality condition holds, we see that character of the spin connection is purely real. This is because the reality of the electric field is transferred to its inverse. The Christoffel symbols are real as they are a function of the then real internal metric. Already here we see that the spin connection generates rotations of the compact subgroup once the reality conditions hold.\footnote{~We should remark that is statement is highly basis dependent. If we would  for example choose the selfadjoint basis of $\sltwoC$, the structure constants would be $i\lc{_{ij}^k}$ and hence the spin connection would be purely imaginary. Nevertheless it would generate rotations in this basis.}

As it is well known in the literature, e.\,g. \cite{ashtekar_new_1986,ashtekar_lectures_1998,ashtekar_gravitational_2021}, the extrinsic curvature is defined as the difference between covariant derivatives with respect to the Ashtekar connection and the spin connection above:
\begin{equation}\label{eq:cov_dev_difference}
    \rb{\mathop{D_a}-\mathop{\curlyD_a}}v_i=\sce{_{ij}^k}\rb{A^j_a-\Gamma^j_a}v_k.
\end{equation}
It is therefore the difference of the connections, namely
\begin{equation}
    K^j_a=\frac{1}{i}\rb{A^j_a-\Gamma^j_a}.
\end{equation}
This also gives rise to a decomposition of the Ashtekar connection
\begin{equation}
    A^j_a=\Gamma^j_a+iK^j_a
\end{equation}
into spin connection and extrinsic curvature. We will see that under the second reality condition this is a split into real and imaginary part. 

Further we need to clarify the derivative with respect to the foliation. The spatial metric is a purely spatial object, as it does not carry any internal indices. It is invariant under diffeomorphisms and internal gauge transformations. Therefore when computing the Poisson bracket with the full Hamiltonian consisting of Hamilton, diffeomorphism and Gau{\ss} constraint, only the contribution of the Hamilton constriant is non-vanishing \cite{ashtekar_new_1989}. Hence in the following the dot indicates the Poisson bracket with the Hamilton constraint. 

We can now relate the reality of the derivative of the spatial metric to the extrinsic curvature. See also \cite{ashtekar_gravitational_2021}.

\begin{proposition}
Let the first reality condition hold. Then the following statements are equivalent:
\begin{enumerate}[label=(\roman*)]
\item \label{thm:K_real} $K^i_a$ is, up to a gauge rotation, the boost part of the Ashtekar connection $A^i_a$,
\item \label{thm:qqdot_real} $(qq^{ab})^{\boldsymbol{\cdot}}=(\delta^{ij}\tE^a_i\tE^b_j)^{\boldsymbol{\cdot}}$ is real,
\item \label{thm:qdot_real} $\dot{q}^{ab}$ is real.
\end{enumerate}
\end{proposition}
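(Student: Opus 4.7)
The plan is to reduce the three-way equivalence to two separate arguments: \ref{thm:qqdot_real} $\Leftrightarrow$ \ref{thm:qdot_real} is a purely algebraic consequence of RCI, whereas \ref{thm:K_real} $\Leftrightarrow$ \ref{thm:qqdot_real} follows from computing the Poisson bracket of $\tE^a_i$ with the Hamilton constraint and then using the split $A^i_a = \Gamma^i_a + iK^i_a$ with $\Gamma$ real by RCI.

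For the algebraic equivalence, under RCI the quantities $q^{ab}$, $q_{ab}$ and $q$ are all real. Writing $Q^{ab} := q q^{ab} = \delta^{ij}\tE^a_i\tE^b_j$, one has
\begin{equation*}
\dot{Q}^{ab} = \dot{q}\, q^{ab} + q\, \dot{q}^{ab}, \qquad \dot{q} = -q\, q_{ab}\, \dot{q}^{ab}.
\end{equation*}
All coefficients of this linear relation between $\dot{Q}^{ab}$ and $\dot{q}^{ab}$ are real, so reality of one is equivalent to reality of the other.

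For the dynamical equivalence, I would compute $\dot{\tE}^a_i = \{\tE^a_i, H[N]\}$ using the Poisson bracket \eref{eq:Poisson_Relation_AE} and the Hamilton constraint \eref{eq:Hamilton_constraint}. Varying $F^k_{ab}$ with respect to $A$ and integrating by parts yields a schematic expression $\dot{\tE}^a_i \propto D_b\rb{\tfrac{N s_e}{\sqrt{q}\, s_E}\lc{_i^{jk}}\tE^{[a}_j\tE^{b]}_k}$, with $D$ the $A$-covariant derivative. Substituting $A^i_a = \Gamma^i_a + iK^i_a$ and using $\curlyD_a\tE^b_j = 0$ with the real spin connection $\Gamma$ (valid under RCI), the derivative reduces via \eref{eq:cov_dev_difference} to a purely algebraic $K$-linear expression. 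Contracting with $\delta^{ij}\tE^b_j$ then presents $\dot{Q}^{ab}$ as a $\tE$-quadratic, $K$-linear object whose remaining coefficients are real.

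The decisive final step is to match reality of this $K$-linear expression to the geometric content of \ref{thm:K_real}. The internal part of $K^i_a$ which acts on $\tE^a_i$ as an infinitesimal $\mathrm{O}(3,\CC)$ rotation leaves $Q^{ab} = \delta^{ij}\tE^a_i\tE^b_j$ invariant and therefore decouples from $\dot{Q}^{ab}$, while the rotation-free part of $K^i_a$ is what actually controls it. The hard part will be to execute this decomposition of $K^i_a$ cleanly, verify which components survive the contraction into $\dot{Q}^{ab}$, and show that their reality is precisely the statement that $K^i_a$ equals, up to a real rotation, the imaginary (boost) part of $A^i_a$.
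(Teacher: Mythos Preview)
Your proposal is correct and in its dynamical part follows the paper closely, but it organises the logic differently and, in one place, more economically.

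The paper proves the cycle \ref{thm:qdot_real} $\Rightarrow$ \ref{thm:qqdot_real} $\Rightarrow$ \ref{thm:K_real} $\Rightarrow$ \ref{thm:qdot_real}. For \ref{thm:qqdot_real} $\Rightarrow$ \ref{thm:K_real} it carries out exactly the Poisson-bracket computation you sketch, arriving at the formula $(qq^{ab})^{\boldsymbol{\cdot}}=2\tN\,\tfrac{s_e}{s_E}\brb{\tE^c_m q^{ab}-\tE^{(a}_m q^{b)c}}K^m_c$ and reading off the reality of $K$ from it. For the step \ref{thm:K_real} $\Rightarrow$ \ref{thm:qdot_real}, however, the paper does \emph{not} use your algebraic shortcut: it writes $\dot q^{ab}=\brb{\tfrac{1}{s_E\det\tE}}^{\!\boldsymbol{\cdot}}\,qq^{ab}+\tfrac{1}{q}\,(qq^{ab})^{\boldsymbol{\cdot}}$ and computes $\dot s_E$ and $(1/\det\tE)^{\boldsymbol{\cdot}}$ separately as further Poisson brackets with the Hamilton constraint, each of which again reduces to a $K$-linear expression. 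Your observation that, under RCI, $\dot Q^{ab}$ and $\dot q^{ab}$ are related by an invertible real-linear map (contracting $\dot Q^{ab}$ with $q_{ab}$ recovers $\dot q$) makes those extra computations unnecessary and gives both directions \ref{thm:qqdot_real} $\Leftrightarrow$ \ref{thm:qdot_real} at once.

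You also put your finger on a point the paper leaves implicit: from $(qq^{ab})^{\boldsymbol{\cdot}}$ real one cannot conclude that $K^m_c$ itself is real, only that its part transverse to the infinitesimal $\mathrm{O}(3,\CC)$ rotations of $\tE$ is. The paper simply asserts ``$K^m_c$ is real up to the corresponding inverse gauge rotation'' after displaying the formula; your plan to isolate the rotation kernel explicitly is the cleaner way to justify that step.
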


\begin{proof}
\ref{thm:qdot_real} $\Rightarrow$ \ref{thm:qqdot_real}: With RCI $q_{ab}$ and $q$ are real.  We compute the derivative of the metric determinant,
\begin{equation}
    \dot{q}=\frac{3}{3!}\lcdp{^{abc}}\lcdp{^{a'b'c'}}q_{aa'}q_{bb'}\dot{q}_{cc'}.
\end{equation}
Since $\dot{q}_{ab}\in\RR$ by assumption, $\dot{q}\in\RR$. Hence 
\begin{equation}
    (qq^{ab})^{\boldsymbol{\cdot}}=\dot{q}q^{ab}+q\dot{q}^{ab}
\end{equation}
is real.

\ref{thm:qqdot_real} $\Rightarrow$ \ref{thm:K_real}: We start with expanding the derivative. This yields 
\begin{equation}
    (qq^{ab})^{\boldsymbol{\cdot}}=\delta^{ij}(\tE^a_i\tE^b_j)^{\boldsymbol{\cdot}}=2\delta^{ij}\tE^{(a}_i\dot{\tE}^{b)}_j.
\end{equation}
Hence we need to compute the derivative of the electric field. Thus with the Hamilton constraint in \eref{eq:Hamilton_constraint}, we need to work out
\begin{equation}
\begin{aligned}
    \dot{\tE}^a_j(x)&=\PB{\tE^a_j(x)}{C[N]}=\\
    &=-\int_M\!\grad{^3y}\rb{N\frac{\lambda s_e}{\sqrt{q} \kappa s_E} \lc{_k^{mn}}\tE^c_m\tE^b_n}(y) \PB{\tE^a_j(x)}{F^k_{cb}(y)}.
\end{aligned}    
\end{equation}
Here the only non-vanishing contribution from the Hamilton constraint is the Poisson bracket with the curvature
\begin{equation}\label{eq:curvature_of_A}
    F^k_{ab}=2\partial_{[a}A^k_{b]}+\lc{_{mn}^k}A^m_a A^n_b.
\end{equation}
We therefore first compute
\begin{equation}
\begin{aligned}
    &\lc{_{mn}^k}\PB{\tE^a_j(x)}{A^m_a(y)A^n_b(y)}=-\frac{i\kappa}{2\lambda}\lc{_{mn}^k}\rb{\delta^c_a\delta^m_j A^n_b(y)+\delta^c_b\delta^n_j A^m_a(y)}\delta(x,y)\\
    &=-\frac{i\kappa}{2\lambda}\lc{_{jn}^k}\rb{\delta^c_a A^n_b(y)+\delta^c_b A^n_a(y)}\delta(x,y)\\
    &=-i\frac{\kappa}{\lambda}\lc{_{jn}^k}\delta^c_{[a} A^n_{b]}(y)\delta(x,y).
\end{aligned}    
\end{equation}
With the $\partial_{[a}A^k_{b]}$ piece of \ref{eq:curvature_of_A} we deal as follows. The anti-symmetrisation is obsolete in the presence of the structure constants. The derivative is with respect to $y$ and therefore acts on $y$-dependent objects only and we can move it past $\tE^a_j(x)$. We can therefore perform an integration by parts, where we drop the surface term, and evaluate the Poisson bracket. Further introducing $\Nt=N/\sqrt{q}$ we find 
\begin{equation}
    \begin{aligned}
    \dot{\tE}^a_j(x)&=-\frac{\lambda}{\kappa }\int_M\!\grad{^3y}\Biggl[\begin{aligned}[t]
    &-\partial_c\Brb{\Nt\frac{s_e}{s_E} \lc{_k^{mn}}\tE^c_m\tE^b_n}(y) 2\Brb{-\frac{i\kappa}{2\lambda}}\delta^a_b\delta^k_j\\ &+ \Brb{\Nt\frac{s_e}{s_E} \lc{_m^{mn}}\tE^c_m\tE^b_n}(y)\rb{-\frac{i\kappa}{\lambda}\lc{_{jl}^k}\delta^a_{[c} A^l_{b]}(y)}\Biggr]\delta(x,y)
    \end{aligned}\\
    &= i\rb{
    -\lc{_j^{mn}}\partial_c\Brb{\Nt\frac{s_e}{s_E} \tE^c_m\tE^a_n}+ \Nt\frac{s_e}{s_E} \lc{_k^{mn}}\lc{_{jl}^k}\tE^a_m\tE^b_n A^l_b}(x)\\
    &=-i\rb{
    \lc{_j^{mn}}\partial_c\Brb{\Nt\frac{s_e}{s_E} \tE^c_m\tE^a_n}- 2\Nt\frac{s_e}{s_E} \tE^{[a}_j\tE^{c]}_l A^l_c}(x),
\end{aligned}  
\end{equation}
where, in the second to last step, we performed the integration. We note that, as expected, the $\lambda$-dependence drops out in the Poisson bracket with a constraint. 
We omit the $x$-dependence of the electric field again. 
Using that in fact 
\begin{equation}
    \lc{_j^{mn}}\rb{\lc{_{mm'}^{n'}}A^{m'}_c\tE^{[a}_{n'}\tE^{c]}_n+
    \lc{_{nm'}^{n'}}A^{m'}_c\tE^{[a}_{m}\tE^{c]}_{n'}}=-2A^l_c\tE^{[a}_j\tE^{c]}_l,
\end{equation}
we can write this in terms of a covariant derivative, i.\,e. 
\begin{equation}
    \begin{aligned}
    \dot{\tE}^a_j&=-i\rb{
    \lc{_j^{mn}}\partial_c\Brb{\Nt\frac{s_e}{s_E} \tE^c_m\tE^a_n}- 2\Nt\frac{s_e}{s_E} \tE^{[a}_m\tE^{c]}_n A^l_c}\\
    &=-i\lc{_j^{mn}}\mathop{D_c}\rb{\Nt\frac{s_e}{s_E} \tE^c_m\tE^a_n}.
    \end{aligned}
\end{equation}

We can now go on with the expression we are actually interested in:
\begin{equation}
\begin{aligned}
     (qq^{ab})^{\boldsymbol{\cdot}}&=-2i\delta^{ij}\lc{_j^{kl}}\tE^{(a}_i\mathop{D_c}\brb{\Nt\frac{s_e}{s_E} \tE^{b)}_k\tE^c_l}\\
             &=-2i\lc{^{jkl}}\tE^{(a}_j\rb{\Nt\frac{s_e}{s_E} \mathop{D_c}\brb{\tE^{b)}_k\tE^c_l} +\tE^{b)}_k\tE^c_l\mathop{D_c}\brb{\Nt\frac{s_e}{s_E}}}\\
             &=-2i\Nt\frac{s_e}{s_E} \lc{^{jkl}}\tE^{(a}_j\rb{\brb{\mathop{D_c}\tE^{b)}_k}\tE^c_l+\tE^{b)}_k\brb{\mathop{D_c}\tE^c_l}}\\
             &=-2i\Nt\frac{s_e}{s_E} \lc{^{jkl}}\tE^{(a}_j\rb{\tE^{b)}_k\lc{_{km}^n}\rb{A^m_c-\Gamma^m_c}\tE^c_l+\tE^{b)}_k\lc{_{lm}^n}\rb{A^m_c-\Gamma^m_c}\tE^c_l}\\
             &=2\Nt\frac{s_e}{s_E} \lc{^{kl}_j}\tE^{(a}_j\rb{\tE^{b)}_k\lc{_{km}^n}K^m_c\tE^{b)}_k\tE^c_l+\tE^{b)}_k\lc{_{lm}^n}K^m_c\tE^c_l}\\
             &=2\Nt\frac{s_e}{s_E} \tE^{(a}_j\rb{\tE^{b)}_k \rb{-\delta_{jm}\delta^{ln}+\delta^n_j\delta^l_m} K^m_c\tE^c_l+\tE^{b)}_k \rb{\delta_{jm}\delta^{kn}-\delta^n_j\delta^k_m}K^m_c\tE^c_l}\\
             &=2\Nt\frac{s_e}{s_E} \tE^{(a}_j\rb{-\tE^{b)}_n K_{jc} \tE^{cn} + \tE^{b)}_j K^l_c \tE^c_l  +\tE^{b)}_n K_{jc} \tE^{cn} - \tE^{b)}_k K^k_c \tE^c_j}\\
             &=2\Nt\frac{s_e}{s_E} \rb{qq^{ab}\tE^c_l K^l_c - q\tE^{(b}_m q^{a)c} K^k_c }
\end{aligned}     
\end{equation}
The second term in the second line vanishes because of a contraction of symmetric with anti-symmetric indices. From the third to the fourth line we replaced the covariant derivatives $D_c$ with $D_c-\curlyD_c$ and used \eref{eq:cov_dev_difference}. The difference of connections was then replaced by $iK^m_c$ and we contracted the structure constants. From the penultimate to the last line, we contracted electric fields and the internal metric to the densitised spatial metric. Absorbing the metric determinant in the lapse, this yields
\begin{equation}\label{eq:RCII_final}
    (qq^{ab})^{\boldsymbol{\cdot}}=2\tN\frac{s_e}{s_E} \rb{\tE^c_m q^{ab}-\tE^{(a}_m q^{b)c} } K^m_c.
\end{equation}
By assumption the first reality condition holds, so the spatial metric and the densitised lapse are real. Further $\tE^c_m$ is real up to a gauge rotation. This implies that indeed $K^m_c$ is real up to the corresponding inverse gauge rotation.
Therefore without referring to a specific basis, $(qq^{ab})^{\boldsymbol{\cdot}}\in\RR$ implies that up to a gauge rotation $K^i_a$ is the boost part of the Ashtekar connection $A^i_c$.\footnote{~This statement as well is highly basis dependent. Similar to the spin connection described earlier, becoming imaginary in the selfadjoint basis of $\sltwoC$, the Levi-Civita symbols in the calculation above are actually the structure constants and change with respect to the basis. One then has to carefully work out index positions and inverse structure constants in order to get contraction right. The result then is an additional factor of $i$ in front of \eref{eq:RCII_final}. Therefore the extrinsic curvature would have to be imaginary up to a gauge rotation, but would still be the boost part of the Ashtekar connection.}

\ref{thm:K_real} $\Rightarrow$ \ref{thm:qdot_real}:
Here we need the derivative of the metric:
\begin{equation}
\dot{q}^{ab}=\rb{\frac{1}{q}\delta^{ij}\tE^a_i\tE^b_j}^{\cdot}=\rb{\frac{1}{s_E \det(\tE)}}^{\cdot} qq^{ab}+\frac{1}{q}\rb{qq^{ab}}^{\cdot}.
\end{equation}
The only thing not known to be real at this point is 
\begin{equation}\label{eq:derivatives_sqdetetasEdetE}
\rb{\frac{1}{s_E \det(\tE)}}^{\cdot}=-\rb{\frac{\dot{s_E}}{{s_E}^2\det(\tE)}+\frac{1}{s_E}{\rb{\frac{1}{\det(\tE)}}}^\cdot}.
\end{equation}
Both $s_E$ and $\det(\tE)$ are functions of $\tE^a_j$ only. We can therefore compute the Poisson bracket with the Hamilton constraint as follows:
\begin{equation}
\PB{s_E}{C[N]}=-\frac{i\kappa}{2\lambda}\int_M\grad{^3z}\frac{\delta s_E(x)}{\delta\tE^a_j(z)}\frac{\delta C[N]}{\delta A_a^j(z)}.
\end{equation}
Here we already know the derivative of the Hamilton constraint, namely
\begin{equation}
-\frac{i\kappa}{2\lambda}\frac{\delta C[N]}{\delta A_a^j(z)}=\dot{\tE}^a_j(z)=-i\lc{_j^{mn}}\mathop{D_c}\rb{\Nt\frac{s_e}{s_E} \tE^c_m\tE^a_n}.
\end{equation}
The derivative of the sign is given by 
\begin{equation}
\frac{\delta s_E(x)}{\delta\tE^a_j(z)}=-s_E\Et^j_a(x)\delta(x,z).
\end{equation}
So upon performing the $z$ integration and again replacing the covariant derivative by $D-\curlyD$ and hence introducing $K^j_a$ in the expressions, we find 
\begin{equation}
\dot{s_E}=2 s_e \Nt K^m_c\tE^c_m.
\end{equation}
Similarly we need the Poisson bracket of the determinant:
\begin{equation}
\PB{\frac{1}{\det(\tE)}}{C[N]}=-\frac{i\kappa}{2\lambda}\int_M\grad{^3z}\rb{\frac{\delta}{\delta\tE^a_j(z)}\frac{1}{\det(\tE)(x)}}\frac{\delta C[N]}{\delta A_a^j(z)}.
\end{equation}
The derivative of the inverse determinant is given by
\begin{equation}
\rb{\frac{\delta}{\delta\tE^a_j(z)}\frac{1}{\det(\tE)(x)}}=-\frac{1}{\det(\tE)}\Et^j_a(x)\delta(x,z).
\end{equation}
Again putting this together, performing the $z$ integration and introducing $K^j_a$ yields
\begin{equation}
{\rb{\frac{1}{\det(\tE)}}}^\cdot=
\frac{2 s_e\Nt}{\det(\tE)s_E}K^m_c\tE^c_m
\end{equation}
Inserting all of this in \eref{eq:derivatives_sqdetetasEdetE} we therefore find the rather compact expression
\begin{equation}
\rb{\frac{1}{s_E \det(\tE)}}^{\cdot}=-\frac{4s_e\Nt}{\det(\tE)}K^m_c\tE^c_m.
\end{equation}
As possible gauge rotations cancel in the contraction of electric field and extrinsic curvature, which are both real by assumption, this is real valued.
Consequently, $\dot{q}^{ab}$ is real. This completes the proof of the proposition. 
\end{proof}

\subsection{Recovering real gravity in ADM formulation}
\label{sec:recovering_ADM}
As a last step we want to show that indeed both values of $\lambda$ reproduce a real formulation of gravity, once the reality conditions hold. However, we cannot do this as long as we work in a formulation in terms of $\tE^a_j$ and $A^j_a$ as the reality conditions for are not implementable classically. To this end we need to reduce the holomorphic formulation in terms of the complex Ashtekar connection and the electric field to a formulation in terms of the extrinsic curvature and the electric field. This allows to compare the Poisson relation and the constraints to the (real) ADM formulation of general relativity in these variables, as e.\,g. presented in \cite{thiemann_modern_2007}. 

A similar analysis is performed when it is shown that real gravity in terms of Ashtekar-Barbero variables is equivalent to the ADM formulation, as long as the Gau{\ss} constraint holds. This is e.\,g. also shown in \cite{thiemann_modern_2007}. We perform essentially the same steps but have in addition to carefully work out which quantities are complex. 

Before we actually start, we need to express the Gau{\ss} constraint by extrinsic curvature and electric field. The covariant derivative with respect to the Ashtekar connection can be written as 
\begin{equation}
    D_a v_i=\curlyD_a v_a + i\lc{_{ij}^k} K^j_a v_k.
\end{equation}
Now using that $\curlyD_a$ is compatible with $\tE^b_j$, we hence find 
\begin{equation}\label{eq:Gauss_constraint_E_K}
    G_i=\frac{2\lambda}{\kappa i} D_a\tE^a_i = \frac{2\lambda}{\kappa}\lc{_{ij}^k} K^j_a\tE^a_k .
\end{equation}
Hence in the following we express the Poisson relation and remaining constraints in terms of the ADM variables up to this Gau{\ss} constraint.

\subsubsection{Poisson relation}
Simply solving the Ashtekar connection for the extrinsic curvature, we can compute the Poisson bracket between $K^i_a$ and the electric field. We find 
\begin{equation}
    \PB{K^i_a(x)}{\tE^b_j(y)}=\frac{1}{i}\PB{A^i_a(x)-\Gamma^i_a(x)}{\tE^b_j(y)}=\frac{\kappa}{2\lambda}\delta^i_j\delta^a_b\delta(x,y),
\end{equation}
since $\PB{\Gamma^i_a(x)}{\tE^b_j(y)}=0$. For both $K^i_a$ and $\tE^b_j$ real, and  $\lambda=1$ this perfectly reproduces the ADM Poisson relation
\begin{equation}
    \PB{K^i_a(x)}{\tE^b_j(y)}=\frac{\kappa}{2}\delta^i_j\delta^a_b\delta(x,y),
\end{equation}
Further, as the right hand side -- in form of the Kronecker-deltas -- is invariant under orthogonal transformations, any any gauge rotation present is removable and the Poisson relation would be equivalent to the ADM relation.

For $\lambda=i$ instead, we have -- independent of the presence of gauge rotations -- an additional factor of i present:
\begin{equation}
    \PB{K^i_a(x)}{\tE^b_j(y)}=\frac{\kappa}{2i}\delta^i_j\delta^a_b\delta(x,y).
\end{equation}
Once we work out the reality of the constraints, it will be clear that this change to the ADM relation is consistent with a change of ADM constraints in this case and nevertheless reproduces the dynamics of ADM.

\subsubsection{Diffeomorphism constraint}\label{sec:ADM_diffeo}
We recall from section \ref{sec:SelfdualPalatiniAction} the diffeomorphism constraint \eref{eq:diffeo_constraint} and the explicit expression for the curvature of the Ashtekar connection \eref{eq:curvature_of_A}. In the latter we replace $A^i_a$ by $\Gamma^i_a+i K^i_a$ and therefore find 
\begin{equation}
\begin{aligned}
    C_a &= \frac{2\lambda}{i\kappa} \tE^b_i F^{i}_{ab} \\
    &= \frac{2\lambda}{i\kappa} \tE^b_i \rb{2 \partial_{[a}(\Gamma^i_{b]}+i K^i_{b]}) + \lc{_{kl}^i} \rb{\Gamma^k_a\Gamma^l_b+i\Gamma^k_aK^l_b+iK^k_a\Gamma^l_b-K^l_aK^k_b}}\\
    &=\frac{2\lambda}{i\kappa} \rb{ \tE^b_i R^i_{ab} +2i \tE^b_i \partial_{[a}K^i_{b]} + i \lc{_{kl}^i} \rb{\Gamma^k_a K^l_b-\Gamma^k_b\Gamma^l_a}+K^i_b G_i }.
\end{aligned}    
\end{equation}
Here we identified the Gau{\ss} constraint, used the antisymmetry of the structure constant and combined all terms with the spin connection only in its curvature 
\begin{equation}
    R^i_{ab}=2 \partial_{[a}\Gamma^i_{b]}+\lc{_{kl}^i} \Gamma^k_a\Gamma^l_b.
\end{equation}
Similar to a real formulation, and e.\,g. as shown in \cite{thiemann_modern_2007}, the contraction $\tE^b_i R^i_{ab}$ vanishes because of the Bianchi identity. Realising that the antisymmetrisation of spatial indices would cause Christoffel symbols to vanish, we can introduce the covariant derivative compatible with $\tE^a_i$ in the expression. Because of the compatibility, we can further move the electric field inside the derivative. Hence, supposing the Gau{\ss} constraint is vanishing, we find
\begin{equation}\label{eq:diffeo_ADM}
    C_a=\frac{4\lambda}{\kappa} \curlyD_{[a}(K^i_{b]}\tE^b_i)=-\frac{2\lambda}{\kappa}\curlyD_b\rb{K^i_a\tE^b_i-\delta^b_aK^i_c\tE^c_i}.
\end{equation}
This has exactly the form of the ADM diffeomorphism constraint and therefore perfectly reproduces it for $\lambda=1$. Present gauge rotations cancel in the contraction of electric field and extrinsic curvature. In the case $\lambda=i$, the real constraint is rescaled by $i$, as expected. We come back to this after working out the Hamilton constraint.

\subsubsection{Hamilton constraint}
Eventually we express the Hamilton constraint  \eref{eq:Hamilton_constraint} by electric field and extrinsic curvature. From our treatment of the diffeomorphism constraint we already known what the curvature of the Ashtekar connection turns into. Only the contractions with electric fields are different. Hence 
\begin{equation}
\begin{aligned}
    C &= -\frac{\lambda s_e}{\sqrt{q} \kappa s_E} \lc{_{k}^{ij}}\tE^a_i\tE^b_j F^k_{ab}\\
    &= -\frac{\lambda s_e}{\sqrt{q} \kappa s_E} \lc{_{k}^{ij}}\tE^a_i\tE^b_j \rb{R^k_{ab}+2i \curlyD_{[a}K^k_{b]}- \lc{_{mn}^k}K^m_a K^n_b }.
\end{aligned}    
\end{equation}
We look individually at the three contributions. 

The first term we consider is the one including the curvature of $\Gamma^i_a$. A detailed calculation reveals that this contribution is in fact proportional to the Ricci curvature $R$ of the spin connection. The result is
\begin{equation}\label{eq:Ricci_scalar}
    \lc{_{k}^{ij}}\tE^a_i\tE^b_j R^k_{ab} = -qR.
\end{equation}

Next we show that the second term is related to the Gau{\ss} constraint. In order to realise this, we write out the antisymmetrisation in the covariant derivative and use for each term the corresponding electric field to form the Gau{\ss} constraint according to \eref{eq:Gauss_constraint_E_K}. This yields
\begin{equation}
\begin{aligned}
   2i \lc{_{k}^{ij}}\tE^a_i\tE^b_j  \curlyD_{[a}K^k_{b]} &= i \lc{_{k}^{ij}}\tE^a_i\tE^b_j \rb{\curlyD_{a}K^k_{b}- \curlyD_{b}K^k_{a}}   \\
   &= \frac{i\kappa}{2\lambda} \rb{-\delta^{il}\tE^a_i\curlyD_aG_l - \delta^{jl}\tE^b_j \curlyD_bG_l}\\
   &=-\frac{i\kappa}{\lambda} \tE^a_j\curlyD_aG^j.
\end{aligned}    
\end{equation}
On the Gau{\ss} constraint surface $G^j$ vanishes everywhere and so does its derivative. Hence -- irrespective of the value of $\lambda$ --  this term does not contribute to the Hamilton constraint. 

We are now left with the contraction of two electric field and two curvature components each. This yields
\begin{equation}
    \lc{_{k}^{ij}}\lc{_{mn}^k}\tE^a_i\tE^b_j K^m_a K^n_b =2 K^{[m}_a K^{n]}_b \tE^a_m\tE^b_n=\rb{(K^m_a \tE^a_m)(K^n_b \tE^b_n)-K^n_a \tE^b_n K^m_b \tE^a_m}.
\end{equation}

Therefore, on the Gau{\ss} constraint surface, the Hamilton constraint in terms of extrinsic curvature and electric field turns into 
\begin{equation}\label{eq:hamilton_constraint_ADM}
    C=\frac{\lambda s_e}{\sqrt{q} \kappa s_E}\rb{qR+\rb{(K^m_a \tE^a_m)(K^n_b \tE^b_n)-K^n_a \tE^b_n K^m_b \tE^a_m}}.
\end{equation}
Again, possible gauge rotations of electric field and extrinsic curvature cancel in the contractions and the Ricci scalar is a function of the spatial metric, which is also invariant under such rotations. 

Recalling the diffeomorphism constraint \eref{eq:diffeo_ADM}, both -- this and the Hamilton constraint \eref{eq:hamilton_constraint_ADM} -- are identical in form to the real ADM constraints. So assuming the reality of electric field and extrinsic curvature, according to the reality conditions, they perfectly match these real constraints up to the factor $\lambda$ and the sign combination $s_e/s_E$.     

The factor $\lambda$ is of course in perfect agreement with the symplectic structure. For $\lambda=1$ we match the real ADM formulation. For $\lambda=i$ the constraints are purely imaginary but so is the Poisson bracket. Hence the described physics does not change. This has therefore the -- rather unintuitive -- interpretation of real ADM gravity that is described by using a purely imaginary action. 

The additional sign $s_e/s_E$ in the Hamilton constraint is the only remnant left from starting with a complex formulation of gravity. This does not change the hypersurface deformation algebra as well and therefore  does not matter.

This connection of the selfdual holomorphic formulation of general relativity to its real ADM formulation ultimately
shows the eligibility of the choice of reality conditions.

\section{Discussion and outlook}\label{sec:discussion}
With the present work we want to contribute to the understanding of general relativity in  Ashtekar's  selfdual variables, as this was the original starting point of loop quantum gravity. To this end, we provided a detailed description of the (anti-)selfdual split of the complex Lorentz Lie algebra and the corresponding complex Lorentz group and its relation to $\SLtwoC$, via the construction of an appropriate isomorphism. 

Special attention was put on working out the consequences of the split for structure constants and the corresponding Cartan-Killing metrics. This, as well as the relation to their $\SLtwoC$ counterpart turned out be of importance when relating the electric field to the spatial metric. 

Starting from the $\laC$-connection of complex Palatini gravity, we further  provided a comprehensive derivation of the $\SLtwoC$ connection, which turns into Ashtekar's connection when pulled back to the spatial manifold. 
In particular, the $\SLtwoC$ formulation follows directly from the $\LGC$ theory. 
Remarkably, this formulation based on the complex Lorentz group needs no extra structures such as a spin structure, in order to go over to the $\SLtwoC$ formulation. 
Lifting our treatment of complex gravity to a bundle formulation would need more work. However, we see no obstruction for doing this at a later point.

Starting from the complex Palatini action, performing its selfdual splitting, restricting to the part of the action in terms of selfdual variables and transferring it to an $\SLtwoC$ formulation via the isomorphism, we gave an extensive derivation of the Hamilton formulation in terms of Ashtekar's selfdual variables. 

In the course of this, two signs, corresponding to the respective orientations of complex tetrads and triads, appear. Their origin is the lack of a definite orientation of complex objects, which we could just require. They do not pose any problem, even when recovering real ADM gravity. As it was not possible to relate the orientation of tetrads and triads in this context, further work is needed in order to understand their role here.  

Different from the treatment of theories with complex variables, but based on real actions, we show in detail that complexified GR based on the complex and in fact holomorphic Palatini action only leads to a non-degenerate symplectic structure that respects this holomorphic viewpoint. Without this, as done in \cite{wieland_complex_2012}, one would have to add the missing Poisson brackets by hand but then in turn would not be working with the complexified Palatini theory anymore. 

Considering the Poisson relation, the phase, which we added in front of the action, became relevant. While the constraint structure is unaffected by this, it allows to consider to special cases. In its absence, i.\,e., for the value $1$, it gives the Poisson relation for selfdual variables as considered in the main part of the literature, cf. \cite{ashtekar_new_1986,baez_gauge_1994,ashtekar_lectures_1998,wilson-ewing_loop_2015,ashtekar_gravitational_2021}. For the value $i$ however, we can enforce the Poisson relation of real Ashtekar-Barbero gravity without the Barbero-Immirzi parameter, but without restricting to a description in terms of real variables. Similar Poisson relations are used in \cite{thiemann_account_1995,thiemann_reality_1996,wieland_complex_2012}. 
Even thought distinguishing between those two Poisson relation has no qualitative impact on the classical formulation, it becomes relevant once its quantisation is considered \cite{sahlmann_revisiting_HS_RCI,sahlmann_revisiting_RCII}.

While the triads of Ashtekar-Barbero gravity   are obtained from the tetrads by fixing time gauge, which causes the internal three metric to be the spatial part of the Minkowski metric, i.\,e., the Euclidean metric\footnote{~See for example \cite{ashtekar_background_2004}.}, the same internal three metric is obtained for  selfdual Ashtekar gravity, but without the necessity of any gauge fixing. 
Assuming a different internal metric is however not compatible with reproducing the ADM formulation under reality conditions. 

In the logic of the applied procedure, however, the complex formulation is only physically useful once the reality conditions are implemented. 
Nevertheless it would be very interesting to look at this complex version of gravity by itself and to understand the physics that might be described by it. 

With this work we further want to provide a detailed resource for the derivations of and considerations about the reality conditions and how they allow to recover real gravity formulations. In addition to that we showed that the addition phase in the formulation does not affect the conditions. 

In future work it would be interesting to consider the inclusion of matter into the selfdual holomorphic gravity formulation. It is expected that matter coupling, as it changes the dynamics in form of the constraints, at least affects the second reality condition. The relation to the symmetry-reduced context will be discussed elsewhere. 

Besides a clear formulation of the classical theory of selfdual Ashtekar gravity, this is supposed to provide the groundwork for a quantum theory of selfdual loop quantum gravity. The quantisation will be considered in \cite{sahlmann_revisiting_HS_RCI,sahlmann_revisiting_RCII}.

\ack
R.S. and H.S. benefitted from discussions with Wojciech Kamiński, Jerzy Lewandowski, Thomas Thiemann, Madhavan Varadarajan and Wolfgang Wieland. H.S. acknowledges the contribution of the COST Action CA18108. R.S. thanks the Evangelisches Studienwerk Villigst for financial support. 

\appendix 

\section{Complexification of Lie algebras}
\label{app:complexification_lie_algebras}
The complexification $\g_\CC$ of a real Lie algebra $\g$ is (ex. \cite{hall_lie_2015})
\begin{equation}    
\g_\CC=\g\otimes_\RR \CC. 
\end{equation}
If $\g\subset M_n(\CC)$ is a matrix Lie algebra such that for any $a\in \g$, $ia$ is not in $\g$, then (ex. \cite{hall_lie_2015} Prop. 3.38)
\begin{equation}
    \g_\CC \cong \{ a \in M_n(\CC): a= x+iy,\; x,y\in \g \}. 
\end{equation}
This is true in particular for Lie algebras of real matrices, and so 
\begin{equation}
    \sothreeone_\CC \cong \sothreeone + i\,\sothreeone  \cong \{a \in M_4(\CC): a\, \eta \, a^T = \eta\}. 
\end{equation}
Also, we have 
\begin{equation}
\label{eq:real_form_sl2C}
    \sutwo_\CC \cong \sltwoC.  
\end{equation}

If $\g$ is already a complex Lie algebra, one can regard it as a real Lie algebra $\g_\RR$ with twice the dimension, and complexify again. One then obtains 
\begin{equation}
    \g_{\RR\CC}=\g\oplus\overline{\g}, 
\end{equation}
with $\overline{\g}$ the conjugate to $\g$ \cite{samelson_notes_1990}. In particular, 
\begin{equation}
    \sltwoC_{\RR\CC} \cong\sltwoC\oplus\overline{\sltwoC}\cong\sltwoC\oplus\sltwoC
\end{equation}
since $\sltwoC$ has real forms (for example $\sutwo$, \eref{eq:real_form_sl2C}), and is thus isomorphic to its conjugate.

\section*{References}


\end{document}